\documentclass[letterpaper]{article}
\usepackage{arxiv}
\usepackage[hidelinks,colorlinks,allcolors=magenta]{hyperref}
\usepackage{gra phicx}
\usepackage[numbers]{natbib}
\usepackage{caption}
\usepackage[dvipsnames,table]{xcolor}
\usepackage{amsmath,amsthm,amsfonts,amssymb}
\usepackage{booktabs} % For formal tables

\usepackage[ruled]{algorithm2e} % For algorithms

\SetAlFnt{\small}
\SetAlCapFnt{\small}
\SetAlCapNameFnt{\small}
\SetAlCapHSkip{0pt}
\IncMargin{-\parindent}

\usepackage{mathtools}
\usepackage[mathscr]{eucal}
\usepackage[nameinlink]{cleveref}
\usepackage{nth}
\usepackage{nicefrac}
\usepackage{tikz}
\usepackage{multirow}
\usepackage{rotating}
\usepackage{float}
\usepackage{pifont}

\usepackage{thm-restate}

\usepackage{pdfpages}
\renewcommand{\ge}{\geqslant}
\renewcommand{\geq}{\geqslant}
\renewcommand{\le}{\leqslant}
\renewcommand{\leq}{\leqslant}

\newcommand{\cmark}{\ding{51}}%
\newcommand{\xmark}{\ding{55}}%

\newcommand{\quest}{\textcolor{blue}{\textbf{?}}}

%%%%%%%%%%%%%% CUSTOM %%%%%%%%%%%%%%
\newcommand{\ags}{\mathcal{N}}
\newcommand{\ag}{\ensuremath{i}}

\newcommand{\itms}{\mathcal{M}}
\newcommand{\itm}{\ensuremath{g}}
\newcommand{\itmm}{\ensuremath{g'}}

\newcommand{\mv}{v}
\newcommand{\av}{u}
\newcommand{\avs}{\vec{\av}}
\newcommand{\mvs}{\vec{\mv}}

\newcommand{\alc}{A}

\newcommand{\inst}{I}

\newcommand{\dd}{\mathrm{d}}

\newcommand{\sdp}{\succcurlyeq^{\!\mathsf{SD}}}

\definecolor{mohlatblue}{rgb}{0.36, 0.54, 0.66}
\definecolor{segreen}{RGB}{1, 121, 111}
\newcounter{note}[section]

\newcommand{\xx}{\mathbf{x}}
\newcommand{\pp}{\mathbf{p}}
\newcommand{\bbR}{\mathbb{R}_{\ge 0}}
\newcommand{\R}{\bbR}

\newcommand{\bb}{\ensuremath{\mathsf{bb}}}
\newcommand{\mbb}{\ensuremath{\mathsf{mbb}}}
\newcommand{\hierarchy}{\ensuremath{\mathsf{H}}}
\newcommand{\poly}{\ensuremath{\mathsf{poly}}}

\DeclareMathOperator*{\argmax}{arg\,max}
\DeclareMathOperator*{\argmin}{arg\,min}

%%%%%%%%%%%%%% COMMON %%%%%%%%%%%%%%

\newtheorem{theorem}{Theorem}

\newtheorem{lemma}{Lemma}
\newtheorem{proposition}{Proposition}
\newtheorem{corollary}{Corollary}
\theoremstyle{definition}
\newtheorem{definition}{Definition}

\newtheorem{remark}{Remark}
\newtheorem{open}{Open Question}

\usepackage{tcolorbox}
% \newcounter{open}
\renewenvironment{open}{\refstepcounter{open}\begin{tcolorbox}[colback=gray!7, colframe=black, rounded corners] \vspace*{-4pt} \textbf{Open Question \theopen:}}{ \vspace*{-4pt}\end{tcolorbox}}

\DeclarePairedDelimiter{\set}{\{}{\}}

\DeclarePairedDelimiter{\ceil}{\lceil}{\rceil}

\newcommand{\wrt}{w.r.t.\xspace}

\newcommand{\pref}{\sigma}
\newcommand{\mpref}{\pi}

\newcommand{\calI}{\ensuremath{\mathcal{I}}}

\title{Fair Division with Market Values}
\author {
    Siddharth Barman\textsuperscript{\rm 1},
    Soroush Ebadian\textsuperscript{\rm 2},
    Mohamad Latifian\textsuperscript{\rm 2},
    Nisarg Shah\textsuperscript{\rm 2}
}
\date{
    \textsuperscript{\rm 1}Indian Institute of Science Bangalore\\
    \textsuperscript{\rm 2}University of Toronto\\
    barman@iisc.ac.in, \{soroush,latifian,nisarg\}@cs.toronto.edu
}

\begin{document}

\maketitle

\begin{abstract}
We introduce a model of fair division with market values, where indivisible goods must be partitioned among agents with (additive) subjective valuations, and each good additionally has a market value. The market valuation can be viewed as a separate additive valuation that holds identically across all the agents. We seek allocations that are simultaneously fair with respect to the subjective valuations and with respect to the market valuation. 

We show that an allocation that satisfies stochastically-dominant envy-freeness up to one good (SD-EF1) with respect to both the subjective valuations and the market valuation does not always exist, but the weaker guarantee of EF1 with respect to the subjective valuations along with SD-EF1 with respect to the market valuation can be guaranteed. We also study a number of other guarantees such as Pareto optimality, EFX, and  MMS. In addition, we explore non-additive valuations and extend our model to cake-cutting. Along the way, we identify several tantalizing open questions. 
\end{abstract}

\section{Introduction}
For centuries, human civilizations have pondered how to settle disputes centered around division of goods. Such resolution requirements come up often in cases such as divorce settlement and estate division (in the absence of a will). A common practice in the real world is not to liquidate all the goods but to assess their \emph{fair market value} (FMV) and create an ``equal division'' whereby the total fair market value of the goods awarded to each party is (almost) equal. Such a division is sometimes even a legal requirement \cite{Kagan21}. For instance, fair market values of land---as determined by assessors \cite{Kagan21b}--are often used to settle land disputes \cite{shtechman2021fair}. However, an FMV-based approach has two limitations: (i) parties may disagree on what the fair market value of a good is, and (ii) parties may have personal (i.e., subjective) cardinal preferences (over the goods) which may be quite different from the assessed market values.

Starting with the seminal work of \citet{Stein48}, the mathematical theory of fair division has made significant advances over the last century. The theory systematically addresses the issue (ii) mentioned above by developing models wherein the participating agents (parties) can submit their own \emph{heterogeneous} valuations over the goods. Here, the underlying goal is to find a division that each agent would consider fair \emph{according to their own valuation}. Hence, the models provide a systematic way to avoid disagreements. In recent years, fair division literature has made impressive strides in the context of indivisible goods, resulting in allocation methods that achieve appealing fairness guarantees. One such notion is \emph{envy-freeness up to one good} (EF1), which demands that no agent prefers the allocation of another agent over their own if we hypothetically exclude just one good from the envied agent's allocation~\cite{LMMS04,Bud11,CKMP+19}.

Despite the mathematical and conceptual appeal of this framework, equal division of goods under market values still holds unshakable importance in real-world dispute resolutions. Hence, we study the framework of \emph{fair division with market values}, where we seek an allocation of the goods that the agents find fair with respect to their heterogeneous personal preferences (``subjective utilities''), and that is also a near-equal division under the market values. Formally, a set $\itms$ of (indivisible) goods must be allocated among a set $\ags$ of $n$ agents. Each agent $i \in \ags$ submits an additive subjective utility function $u_i : 2^\itms \to \R$ and there is an additive market value function $v : 2^\itms \to \R$.\footnote{A set function $f : 2^\itms \to \R$ is said to be additive if $f(S) = \sum_{r \in S} f(\set{r})$ for all $S \subseteq \itms$ with $f(\emptyset) = 0$.} Our goal is to seek an allocation that is simultaneously fair (e.g., EF1) with respect to the subjective utilities $\set{u_i}_{i \in \ags}$ and with respect to the market values $v$. 

The significance of this formulation is further substantiated by the following observations:  First, near-equal division under the market values can correspond to a feasibility constraint, imposed due to legal or normative reasons. Hence, one can view the requirement under market values as seeking an allocation that is  fair (with respect to the agents' subjective utilities) as well as \emph{feasible} (with respect to the market values). Indeed, with this viewpoint, the current work contributes to the growing body of work on fair division under constraints (see, e.g., \cite{suksompong2021constraints}), and some of our results are derived using ideas from this line of work. 

Second, one can adopt an epistemic perspective that the subjective utilities provided by the agents are in fact their (personal) noisy estimates of the market values of the goods. Further, the ``market values'' that we take as input may be an expert's estimate or algorithmic prediction. Achieving an equal division with respect to both agents' personal estimates and the external estimate can provide greater robustness and prevent future disputes. 

Our work is closely related to two very recent papers, which also study fair division in the presence of market values.\footnote{Our work was conducted concurrently and independently of these works. That said, there is very little overlap between our results and the ones obtained in \cite{dall2023fair} and \cite{bu2023fair}, which we discuss at length in \Cref{sec:related}.}  \citet{dall2023fair} studies the same model, but with divisible goods and the subjective utility of each agent, for every good, modeled a function of the good's market value. The focus in \cite{dall2023fair} is on the behavior of prominent rules, rather than on concrete fairness guarantees. The work of \citet{bu2023fair} is closer to ours, who consider a generalization of our model with \emph{heterogeneous} market values $\set{v_i}_{i \in \ags}$, motivated as an allocator's preference over which agent should get which good. Their result for this general model establishes guarantees for proportionality, a notion weaker than the one studied in the current paper, namely, envy-freeness. In particular, there are two results where the current paper and the work of \citet{bu2023fair} overlap: First, we show that the fairness guarantee for goods obtained in \cite{bu2023fair} can be viewed as a corollary of a result due to \citet{BB18}. Second, we show that the existential result in \cite{bu2023fair} can be turned into a polynomial-time algorithm in our model (while \cite{bu2023fair} provides a polynomial-time algorithm for a weaker notion). Our study of Pareto optimality, EFX, and MMS in this framework is entirely novel. We discuss these technical points at length in \Cref{sec:related}. 

\subsection{Our Contributions}\label{sec:contributions}
In \Cref{sec:ef-both-sides}, we begin by asking whether there always exists an allocation of indivisible goods that satisfies \emph{stochastic-dominance envy-freeness up to one good} (SD-EF1)---a strengthening of EF1---\wrt\footnote{We will, throughout, abbreviate `with respect to' as \wrt} both additive subjective utilities and additive market values. We answer the question in the negative, using a carefully crafted instance with 2 agents and 7 goods. Complementing this result, we prove positive results for 2 agents and up to 6 goods, and for any number of agents with identical utilities. Then, relaxing the requirement slightly, we prove that an allocation that is EF1 \wrt subjective utilities and SD-EF1 \wrt market values always exists via a reduction to an algorithm by \citet{BB18}. The  complementary guarantee of SD-EF1 \wrt agents' utilities and EF1 \wrt market valuation stands as an interesting open question. 

Next, in \Cref{sec:extensions}, we consider other desiderata such as Pareto optimality (PO), maximin share fairness (MMS), and envy-freeness up to any good (EFX). In \Cref{sec:indiv-PO}, our main result is that an allocation that is PO \wrt subjective utilities and EF1 \wrt market values always exists, but there are open questions surrounding several other combinations of desiderata. \Cref{sec:mms-efx} shows that MMS or EFX cannot be attained \wrt one side (subjective utilities or market values) together with EF1, MMS, or EFX on the other side. Finally, in \Cref{sec:monotone}, we observe that the existence of an allocation that is EF1 \wrt even monotone subjective utilities and monotone market values remains open, but prove the existence of allocations that are $\nicefrac{1}{2}$-EF1 \wrt subadditive subjective utilities and SD-EF1 \wrt additive market values. These results are summarized in \Cref{tab:my-table}.

\newcommand{\cc}{\cellcolor{gray!25}}
\newcommand{\ccb}{\cellcolor{gray!10}}
\begin{table}[htbp]
    \centering
    \renewcommand{\arraystretch}{1.2}
    \begin{tabular}{cc|c@{\hskip 0.1in}c}
    \hline
    \multicolumn{2}{c|}{\multirow{2}{*}{\textbf{}}} & \multicolumn{2}{c}{\textbf{Market}} \\
    \multicolumn{2}{c|}{} & \textbf{SD-EF1} & \textbf{EF1} \\
    \hline
    \multirow{4}{*}{\begin{sideways}\textbf{Agents}\end{sideways}} & SD-EF1 & \ccb \textcolor{blue}{\xmark}~(\Cref{thm:sd-ef1-both-impossible}) & \ccb \quest \\
    & EF1 & \multicolumn{2}{c}{\cc\textcolor{blue}{\cmark}~(\Cref{cor:ef1-sdef1})} \\ 
    & PO & \ccb \textcolor{blue}{\xmark}~(\Cref{thm:po-sdef1-impossible}) & \ \ \ccb \textcolor{blue}{\cmark}~(\Cref{cor:ef1-fpo}) \\ 
    & MMS/EFX & \multicolumn{2}{c}{\cc \textcolor{blue}{\xmark}~(\Cref{thm:mms,thm:efx})}\\ 
    \hline
    \end{tabular}
    \caption{Summary of our results. PO on the market side is always guaranteed. Positive result with PO on the agents' side is conditioned on strictly positive utilities. MMS and EFX impossibilities hold even when swapping sides (i.e., EF1/SD-EF1 on the agents' side and MMS/EFX on the market side), or demanding MMS/EFX on both sides. 
    }
    \label{tab:my-table}
\end{table}

Finally, in \Cref{sec:cake}, we extend our setup of fair division with market values to cake-cutting. That is, we consider partitioning a heterogeneous divisible good (cake) among agents with subjective utilities in the presence of a market valuation, both subjective utilities and market valuation represented as countably additive measures over the cake. This model is formally introduced in \Cref{sec:cake}. We observe that the existence of an allocation that is EF \wrt both subjective utilities and market values can be reduced to the known existence of a \emph{perfect} cake-division~\cite{DS61}. We prove bounds on the number of cuts (of the cake) and queries (of the subjective utility measures and the market value measure) required to achieve EF \wrt both sides. We also show that there may \emph{not} exist any cake-division that is EF+PO \wrt the subjective utilities along with EF \wrt the market values. Though, relaxing EF+PO to simply PO \wrt the subjective utilities yields a positive result. 

Throughout, we identify a number of interesting and fundamental open questions.

\subsection{Related Work}\label{sec:related}

As mentioned previously, the works of \citet{dall2023fair} and \citet{bu2023fair} are the closest to ours. None of the results obtained in \citet{dall2023fair} imply anything in our model, since this prior work addresses {\it homogeneous} divisible goods. \citet{bu2023fair} study a generalization of our model with \emph{heterogeneous} market valuations $\set{v_i}_{i \in \ags}$.
They view $\set{v_i}_{i \in \ags}$ as the allocator's preferences over which agent should get which item. \citet{bu2023fair} provide four existential results:
\begin{enumerate}
    \item EF1 on both sides, when there is a single market valuation (i.e., our model). 
    \item EF1 on both sides for the particular case of two agents.
    \item PROP-$O(\log n)$ on both sides, where PROP (proportionality) is a notion weaker than envy-freeness.
    \item PROP-2 on both sides under binary valuations.
\end{enumerate}
The first result in the above-mentioned list is studied in the current work as well. However, we show that, in the case of goods (which is what \citeauthor{bu2023fair} address), the result is already implied by \cite{BB18}.
In the list above, the second result is strong, but it does not come with an efficient construction. In fact, they show how to achieve EF2 on both sides in polynomial time, which we are able to improve to EF1, on both sides, in polynomial time. Their EF1 existential result extends to monotone subjective utilities and monotone market values, whereas our EF1 result extends only to monotone subjective utilities and additive market values (though still with polynomial value query complexity). The last two results in the list above address the weaker notion of proportionality, which we do not focus on. In summary, the technical overlap between \cite{bu2023fair} and the current work is little. The two works together, however, point to interesting open questions, which we highlight throughout. 

A bit more broadly, fairness with respect to market values can be viewed as placing feasibility constraints (induced by the market values) on the allocation. Hence, fair division with market values can be considered as case of constrained fair division. Notably, many recent results address fair allocations subject to various feasibility constraints, such as cardinality constraints~\cite{BB18}, matroid constraints~\cite{dror2023fair}, or budget constraints~\cite{barman2023finding}. As we show in this work, EF1 with respect to an additive market valuation can be achieved by placing appropriately defined cardinality constraints. However, it is not clear if EF1 with respect to a more general (e.g., submodular) market valuation or with respect to heterogeneous additive market valuations can be reduced to known constraints.  

Our model is also related to recent works on fair division in two-sided markets~\cite{patro2020fairrec,FMS21,igarashi2023fair}, where agents on two sides of a market are matched to each other, with agents on each side having preferences over those on the other side. \citet{FMS21} also seek EF1 on both sides, but in their case the second side has its own allocation, whereas in our case, the market side is also evaluated using the same allocation. 

%%%%%%%%%%%%%%%%%%%%%%%%%%%%%%%%%%%%%%%%%%
%%%%%%%%%%%%%%%%%%%%%%%%%%%%%%%%%%%%%%%%%%
%%%%%%%%%%%%%%%%%%%%%%%%%%%%%%%%%%%%%%%%%%

\section{Preliminaries}\label{sec:prelim}
An instance of fair division with market values is a quadruple, $\langle \ags, \itms, \mv, \avs=(\av_\ag)_{\ag \in \ags} \rangle$, where $\ags$ is a set of $n$ agents, $\itms$ is a set of $m$ indivisible goods, $\mv: 2^\itms \to \bbR$ is a monotone set function indicating the market valuation of the goods, and $\av_i: 2^\itms \to \bbR$ is the (subjective) monotone utility function of agent $\ag \in \ags$. In particular, $\av_i(S)$ and $\mv(S)$ denote agent $\ag$'s value and the market value for subset of goods $S \subseteq \itms$, respectively. Throughout most of the paper, we consider market valuation $\mv$ and utilities $\avs$ that are \emph{additive}: recall that a set function $f: 2^\itms \to \bbR$ is additive if $f(S) = \sum_{\itm \in S} f(\{\itm\})$ for all $S \subseteq \itms$. \Cref{sec:monotone}, in particular,  goes beyond additive utilities and considers general, monotone ones. For notational convenience, we will write $u_i(g) := u_i(\set{g})$ and $v(g) := v(\{g\})$. 

An allocation is $n$-partition $\alc = (\alc_1, \ldots, \alc_n)$ of $\itms$, where $\alc_i \subseteq \itms$ denotes the bundle of goods given to agent $\ag$. Here, $\alc_i \cap \alc_j = \emptyset$ for all $i \neq j$ and $\cup_{i\in \ags} \alc_i = \itms$. Our goal is to find allocations that are fair with respect to both subjective utilities and market valuation. A prominent fairness criterion for indivisible goods is envy-freeness up to one good. 

\begin{definition}[Envy-free up to one good]
For an instance $\langle \ags, \itms, \mv, \left( u_i \right)_{i \in \ags} \rangle$, we say that an allocation $\alc=(A_1, \ldots, A_n)$ is envy-free up to one good (EF1) with respect to subjective utilities if, for every pair of agents $i,j \in \ags$, either $\alc_j = \emptyset$ or  $\av_i(\alc_i) \ge \av_i(\alc_j \setminus \set{g})$ for some good $g \in \alc_j$. In addition, we say that $\alc$ is EF1 with respect to market values if the above-mentioned inequalities hold with $\av_i$ replaced by $\mv$. 
\end{definition}

Next, we define a strengthening, SD-EF1, based on the stochastic dominance (SD) relation.

\begin{definition}[SD-preference \cite{bogomolnaia2001new}]
For $X,Y \subseteq \itms$, we say that agent $\ag$ SD-prefers $X$ to $Y$, denoted $X \sdp_i Y$, if, for each good $\itm \in \itms$, it holds that $|\set{\itmm \in X : \av_i(\itmm) \geq \av_i(\itm)}| \geq  |\set{\itmm \in Y: \av_i(\itmm) \geq \av_i(\itm)}|$. That is, for each $g$, the subset $X$ has at least as many goods with utility (to agent $\ag$) at least as much as $\itm$ as $Y$ does. Also, we define SD-preference under the market valuation, denoted $X \sdp_\mv Y$, by replacing $\av_i$ with $\mv$. 
\end{definition}

\begin{definition}[SD-EF1~\cite{aziz2023best}]
An allocation $\alc=(A_1, \ldots, A_n)$ is said to be SD-envy-free up to one good (SD-EF1) if, for every pair of agents $i,j \in \ags$, either $\alc_j = \emptyset$ or $\alc_i \sdp_i \alc_j \setminus \set{g}$ for some good $g \in \alc_j$.
\end{definition}

We will say that an additive function $u: \itms \to \bbR$ induces a ranking (with ties) $\pi: \itms \to [|\itms|]$ if $\pi$ ranks the goods in decreasing order of $u$, i.e., $\pi(g) \le \pi(g')$ iff $u(g) \ge u(g')$ for all $g,g' \in \itms$. Here, a weaker notion is that of consistency: $u$ is said to be consistent with $\pi$ if $\pi(g) \le \pi(g')$ implies $u(g) \ge u(g')$ for all $g,g' \in \itms$. 

Note that SD-EF1 is a stronger notion than EF1: if an allocation is SD-EF1 with respect to utilities $\left( u_i \right)_{i \in \ags}$, then it must be EF1 with respect to $\left( u_i \right)_{i \in \ags}$, as well as with respect to any other utility profile $ \left( u'_i \right)_{i \in \ags}$ where, for each $i$, the utility $\av'_i$ is consistent with the ranking $\pi_i$ induced by $\av_i$.  

The following lemma provides a useful observation about SD-EF1 with respect to a profile with identical valuations; recall that the market valuation can be seen as such a profile.

\begin{lemma}
\label{lem:sdef1-constraints}
Let $\pi : \itms \leftrightarrow [m]$ be a strict ranking over $\itms$. If $u_i$ induces $\pi$ for every $i \in \ags$, then $\alc$ is SD-EF1 with respect to $\left( u_i \right)_{i \in \ags}$ iff, for all agents $i \in \ags$ and each index $\ell \in \{1, 2, \ldots, \lceil m/n\rceil \}$, we have $|\alc_i \cap \set{a \in \itms: n\ell - n + 1 \le \mpref(a) \le n\ell}| \le 1$. Further, if $u'_i$ is consistent with $\pi$ for every $i \in \ags$, then such an allocation is still SD-EF1 with respect to $\left( u'_i \right)_{i \in \ags}$.
\end{lemma}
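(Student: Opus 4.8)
The plan is to reduce SD-EF1 to a purely combinatorial statement about \emph{cumulative counts}. Throughout, let $B_\ell := \set{a \in \itms : n\ell - n + 1 \le \pi(a) \le n\ell}$ be the $\ell$-th block and, for an allocation $\alc$ and $t \in [m]$, let $c_i(t) := |\set{g \in \alc_i : \pi(g) \le t}|$ be the number of goods of agent $i$ among the top $t$ goods. Since every $u_i$ induces the \emph{same strict} ranking $\pi$, the relation $\sdp_i$ does not depend on $i$; call it $\sdp_\pi$. Unwinding the definition of SD-preference, for $X,Y \subseteq \itms$ we have $X \sdp_\pi Y$ iff $|\set{g \in X : \pi(g) \le t}| \ge |\set{g \in Y : \pi(g) \le t}|$ for \emph{every} $t \in [m]$: as the pivot good ranges over $\itms$, its $\pi$-rank ranges over all of $[m]$ because $\pi$ is a bijection. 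Next, among the bundles $\set{\alc_j \setminus \set{g} : g \in \alc_j}$, the $\sdp_\pi$-smallest one is obtained by deleting the good of $\alc_j$ of smallest $\pi$-rank, since deleting $g$ decreases $|\set{g' \in \alc_j : \pi(g') \le t}|$ by one exactly for $t \ge \pi(g)$. Combining these two observations, and using $\sum_{i \in \ags} c_i(t) = t$, SD-EF1 is equivalent to: for every $t \in [m]$, the $n$ numbers $\set{c_i(t)}_{i \in \ags}$ differ pairwise by at most one.

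For the forward direction, assume $\alc$ is SD-EF1 and fix $\ell \in \set{1, \dots, \ceil{m/n}}$. If $\ell < \ceil{m/n}$ then $n\ell \le m$, so $\set{c_i(n\ell)}_i$ are within one of each other and sum to $n\ell$, forcing $c_i(n\ell) = \ell$ for all $i$; likewise $c_i(n(\ell-1)) = \ell - 1$, hence $|\alc_i \cap B_\ell| = c_i(n\ell) - c_i(n(\ell-1)) = 1$. If $\ell = \ceil{m/n}$ (the possibly incomplete last block), apply the same reasoning at $t = m$: $\set{c_i(m)}_i$ sum to $m$ and are within one of each other, so each lies in $\set{\floor{m/n}, \ceil{m/n}}$, while $n(\ell-1) \le m$ gives $c_i(n(\ell-1)) = \ell - 1$; thus $|\alc_i \cap B_\ell| = c_i(m) - (\ell - 1) \in \set{0,1}$. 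Either way, $|\alc_i \cap B_\ell| \le 1$.

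For the converse, assume $|\alc_i \cap B_\ell| \le 1$ for all $i, \ell$. Every complete block $B_\ell$ with $\ell \le \ceil{m/n} - 1$ has exactly $n$ goods split among $n$ agents with at most one each, hence \emph{exactly} one each. Now fix $t \in [m]$ and write $t = nq + s$ with $q = \floor{t/n}$ and $0 \le s < n$; the top $t$ goods are $B_1, \dots, B_q$ together with the $s$ highest-ranked goods of $B_{q+1}$. If $q \le \ceil{m/n} - 1$, the one-per-complete-block property gives $c_i(t) = q + \varepsilon_i$ with $\varepsilon_i \in \set{0,1}$ (the block bound caps $i$'s share of $B_{q+1}$, hence of its top $s$ goods, at one), so the $\set{c_i(t)}_i$ are within one; the only remaining case is $n \mid m$ and $t = m$, where $c_i(m) = m/n$ for all $i$. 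By the characterization of the first paragraph, $\alc$ is SD-EF1 \wrt $(u_i)_{i \in \ags}$.

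Finally, for the strengthening to a profile $(u'_i)_{i \in \ags}$ with each $u'_i$ merely consistent with $\pi$: listing the goods in increasing $\pi$-order, consistency makes the sequence of $u'_i$-values nonincreasing, so every set $\set{g \in \itms : u'_i(g) \ge u'_i(h)}$ is a $\pi$-prefix $\set{g : \pi(g) \le t}$ for some $t$. Hence the thresholds tested by $\sdp_{u'_i}$ form a subfamily of those tested by $\sdp_\pi$, so $X \sdp_\pi Y$ implies $X \sdp_{u'_i} Y$; applying this to the witness bundles from the converse shows $\alc$ remains SD-EF1 \wrt $(u'_i)_{i \in \ags}$. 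I expect the only delicate point to be the bookkeeping around the final, possibly incomplete block when $n \nmid m$, together with justifying the ``delete the $\pi$-top good'' reduction; the rest is routine counting.
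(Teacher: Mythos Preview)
Your proof is correct and follows essentially the same counting argument as the paper's: both reduce SD-EF1 to comparing, for each prefix $\set{g : \pi(g) \le t}$, how many goods each agent owns there, and then exploit the block structure $B_\ell$ to pin these counts down. Your version is slightly more polished in that you isolate the intermediate characterization ``SD-EF1 iff the cumulative counts $c_i(t)$ differ pairwise by at most one for every $t$'' and use $\sum_i c_i(n\ell) = n\ell$ to force equality, whereas the paper argues the forward direction via a minimal-counterexample block; but the underlying idea is identical.
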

\begin{proof}
For $\ell \in [\ceil{m/n}]$, let $C_{\ell} = \{\itm \in \itms: n\ell -n + 1 \le \pi(\itm) \le n\ell \}$. First, we prove the if and only if condition. 

($\Rightarrow$) Suppose this is not true. Let $k$ be the smallest value for which there exists an agent $j$ with $|A_j \cap C_k| > 1$, and since $|C_k| \le n$, there must exist another agent $i$ with $|A_i \cap C_k| = 0$. By the choice of $k$, for all $\ell < k$ and agent $i$, we must have $|A_i \cap C_\ell| \le 1$, and since $|C_\ell| = n$, this would imply $|A_i \cap C_\ell| = 1$. Now, since $|A_j \cap C_k| \ge 2$, consider any two goods $g,g' \in |A_j \cap C_k|$ and, without loss of generality, suppose $g \succ_\pi g'$. Taking $T = \set{g'' \in \itms : g'' \succ_\pi g'}$, we see that $|A_i \cap T| = k-1$ whereas $|A_j \cap T| = k+1$, violating the SD-EF1 property. 

($\Leftarrow$) Suppose $A$ is an allocation satisfying the given condition. Consider any two agents $i$ and $j$, and any good $g \in \itms$. Let $T = \set{g' \in \itms: g' \succ_\pi g}$. We want to prove that $|A_i \cap T| \ge |A_j \cap T|-1$. Suppose $g \in C_k$. Then, the desired condition follows from observing, as above, that $|A_i \cap C_{k'}| = |A_j \cap C_{k'}| = 1$ for all $k' < k$ (for which $|C_{k'}| = n$) and $|A_j \cap C_k| \le 1$. 

For the last part of the lemma, it suffices to observe that when $u'_i$ is consistent with $\pi$, for every good $g \in \itms$, there exists a good $g' \in \itms$ such that $\set{g'' \in \itms: u'_i(g'') \ge u'_i(g)} = \set{g'' \in \itms: g'' \succ_\pi g'}$. Hence, SD-EF1 with respect to $\pi$ imposes a strictly greater set of conditions than SD-EF1 with respect to $\avs'$.  
\end{proof}

\section{Simultaneous Envy-Freeness}\label{sec:ef-both-sides}
This section first shows that the strong notion of SD-EF1 cannot be guaranteed \wrt both the subjective utilities and the market valuation simultaneously. 

\begin{theorem}\label{thm:sd-ef1-both-impossible}
There exists an instance with $2$ agents and $7$ goods in which no allocation is SD-EF1 \wrt both the subjective utilities and the market valuation.
\end{theorem}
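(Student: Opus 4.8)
The plan is to construct an explicit instance with $2$ agents and $7$ goods and argue that every allocation violates SD-EF1 on at least one side. The key structural tool is \Cref{lem:sdef1-constraints}: since both the market valuation and any single agent's valuation induce rankings, and with $n=2$ the "blocks" $C_\ell = \{a : 2\ell-1 \le \pi(a) \le 2\ell\}$ are just consecutive pairs in the ranking, SD-EF1 with respect to a ranking $\pi$ is \emph{equivalent} to the combinatorial condition that each agent receives at most one good from each such pair. So I would design the subjective rankings (assume for simplicity all agents' utilities and the market valuation induce \emph{strict} rankings, or note that SD-EF1 w.r.t.\ a strict ranking is only more demanding than w.r.t.\ a weak one) so that the three induced "pair-partitions" of the $7$ goods — one from agent $1$, one from agent $2$, one from the market valuation $v$ — are pairwise "crossing" enough that no single allocation can simultaneously pick a partial transversal (one element per pair) for all three partitions at once on the side that matters. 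With $7$ goods and $n=2$, each ranking splits into three pairs plus one leftover singleton, and an SD-EF1 allocation w.r.t.\ that ranking must split each pair between the two agents; I would choose the rankings so that agent~$1$'s SD-EF1 requirement and the market's SD-EF1 requirement are mutually incompatible, and symmetrically use agent~$2$ as a backup so the "obvious" fixes fail too.

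Concretely, I expect the proof to proceed as follows. First, restate the block condition from \Cref{lem:sdef1-constraints} specialized to $n=2$: for a ranking $\pi$ on $7$ goods with pairs $P_1=\{\pi^{-1}(1),\pi^{-1}(2)\}$, $P_2=\{\pi^{-1}(3),\pi^{-1}(4)\}$, $P_3=\{\pi^{-1}(5),\pi^{-1}(6)\}$, an allocation $(A_1,A_2)$ is SD-EF1 w.r.t.\ $\pi$ iff $|A_1\cap P_k|=|A_2\cap P_k|=1$ for $k=1,2,3$ (the $7$th good is unconstrained). Second, write down the three partitions of $\{g_1,\dots,g_7\}$ into pairs: $\mathcal{P}^v$ (from the market), $\mathcal{P}^1$ (from agent $1$), $\mathcal{P}^2$ (from agent $2$), chosen so that the "split each pair" requirements conflict — e.g.\ so that enforcing $\mathcal{P}^v$ forces two particular goods to go to opposite agents, while enforcing $\mathcal{P}^1$ forces those same two goods to the same agent, and any allocation that respects $\mathcal{P}^2$ also respects neither. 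Third, do a short case analysis: an SD-EF1 allocation on the market side must respect $\mathcal{P}^v$; enumerate the $2^3=8$ ways to split the three market-pairs between the agents (up to the symmetry swapping $A_1\leftrightarrow A_2$, really $4$), and in each case show the resulting bundle either already fails to respect $\mathcal{P}^1$ or, after accounting for the free $7$th good, still must fail SD-EF1 w.r.t.\ agent $1$; then repeat symmetrically to rule out SD-EF1 on the agents' side forcing a violation on the market side. The $7$th good is the crucial slack variable, so I would pick its position in each ranking carefully (likely making it top- or bottom-ranked for some players) so it cannot rescue any case.

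The main obstacle is the \emph{design of the instance}, not the verification: one needs $7$ goods (the excerpt's phrasing and the later positive result for $\le 6$ goods both signal that $6$ is genuinely not enough), and a naive "crossing pairs" construction on $6$ goods would be defeated by the freedom of the allocator, so the $7$th good has to be placed so that its one degree of freedom is provably insufficient. Finding rankings whose induced pair-partitions are jointly "non-transversable" in this strong sense — for \emph{both} the market side and the agents' side simultaneously — is a small combinatorial search; I would look for it by starting from two partitions $\mathcal{P}^v,\mathcal{P}^1$ whose "split" constraints already pin down the allocation up to the $7$th good and a global swap, then choosing $\mathcal{P}^2$ to kill the few surviving allocations. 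Once the instance is fixed, converting the rankings into actual additive utilities is routine (assign values $7,6,\dots,1$ down each ranking, or perturb to break ties), and the final check is the bounded case analysis above, which I expect to compress into a short table.
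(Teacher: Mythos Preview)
Your overall plan matches the paper's approach closely: fix strict rankings for the market and the two agents, use \Cref{lem:sdef1-constraints} to turn SD-EF1 \wrt the market into the pair-splitting constraint on $\{g_1,g_2\},\{g_3,g_4\},\{g_5,g_6\}$, and then do a short case analysis (the paper branches only on which agent gets $g_1$) showing that the subjective-side constraints force both agents to need $g_7$.

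There is one real slip to fix before you execute. You apply the pair-splitting characterization from \Cref{lem:sdef1-constraints} to \emph{each} of $\mathcal P^v,\mathcal P^1,\mathcal P^2$ separately, but that lemma requires an \emph{identical} profile. It is correct on the market side, but on the subjective side SD-EF1 with respect to $(\pi_1,\pi_2)$ is \emph{not} ``pair-split w.r.t.\ $\pi_1$ and pair-split w.r.t.\ $\pi_2$''. Agent~$1$'s condition only gives the one-sided lower bound $|A_1\cap T_k^{\pi_1}|\ge \lceil (k-1)/2\rceil$ (she must hold at least one of her top two, at least two of her top four, at least three of her top six); it gives no upper bound on $|A_1\cap P_k^{\pi_1}|$, since the complementary bound would come from agent~$2$ using $\pi_1$, which she does not. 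So if your case analysis concludes ``this allocation fails to respect $\mathcal P^1$, hence fails SD-EF1 for agent~$1$'', that inference is invalid in general: an allocation can violate pair-splitting for $\pi_1$ and still be SD-EF1 on the subjective side. The paper's proof uses exactly the correct one-sided constraints (e.g., ``agent~$2$ must receive $g_5$'' because otherwise she has zero of her top two; ``both agents have only two of their top six, so both need $g_7$''). Adjust your verification step to use these one-sided threshold inequalities rather than full pair-splitting on $\mathcal P^1,\mathcal P^2$, and your plan goes through.
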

\begin{proof}
Let $\mpref = (g_1 \succ g_2 \succ g_3 \succ g_4 \succ g_5 \succ g_6 \succ g_7)$ be the ranking induced by the market values, and those induced by the subjective utilities of the two agents be
\begin{align*}
&\pref_1 = (g_1 \succ g_3 \succ g_2 \succ g_5 \succ g_4 \succ g_7 \succ g_6),\\
&\pref_{2} = (g_1 \succ g_5 \succ g_2 \succ g_3 \succ g_6 \succ g_7 \succ g_4).
\end{align*}
The impossibility result stated in the theorem will hold for any subjective utilities and market valuation that induce the above-mentioned rankings.

By \Cref{lem:sdef1-constraints},
SD-EF1 \wrt the market valuation demands that each agent receive exactly one good from each of the sets $C_1 = \{g_1, g_2\}$, $C_2 = \{g_3, g_4\}$, and $C_3 = \{g_5, g_6\}$.

Suppose agent $1$ receives good $g_1$. Then, agent $2$ must receive both good $g_5$ (for SD-EF1 \wrt the subjective utilities) and good $g_2$ (for SD-EF1 \wrt the market values). Then, considering the top four  goods in $\pref_1$, agent $1$ must receive $g_3$ to satisfy SD-EF1 \wrt its subjective utility. Next, recalling the constraints imposed by SD-EF1 \wrt market values through sets $C_2$ and $C_3$ above, we conclude that agent $2$ must get $g_4$ and agent $1$ must get $g_6$. So far, we have $\set{g_1,g_3,g_6} \subseteq \alc_1$ and $\set{g_2,g_4,g_5} \subseteq \alc_2$. Since both agents have only two of their top six goods, both agents would need to have $g_7$ in order to satisfy SD-EF1 \wrt their subjective utilities, which is a contradiction. 

The case of agent $2$ receiving good $g_1$ leads to a similar argument, where we can deduce $\set{g_2,g_3,g_6} \subseteq \alc_1$ and $\set{g_1,g_4,g_5} \subseteq \alc_2$. Hence, in this case as well, both agents would require $g_7$ for SD-EF1 \wrt their subjective utilities, which is a contradiction. 

\end{proof}

We complete the picture by providing positive results for achieving SD-EF1 \wrt both the subjective utilities and the market values in two special cases. 
\begin{theorem}\label{thm:sd-ef1-2a-6g}
In instances with two agents and at most six goods, there always exists an allocation that is SD-EF1 \wrt both the subjective utilities and the market valuation. 
\end{theorem}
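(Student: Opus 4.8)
The plan is to do a case analysis driven entirely by the ranking constraints from \Cref{lem:sdef1-constraints}, since SD-EF1 \wrt any profile with a fixed strict ranking reduces to the combinatorial condition $|A_i \cap C_\ell| \le 1$ for every agent and every block $C_\ell$ of (up to) $n=2$ consecutively-ranked goods. So I would first reduce to the case $m=6$ (fewer goods only makes things easier: pad with dummy goods of negligible value that every agent and the market rank last, observe that any SD-EF1 allocation of the padded instance restricts to an SD-EF1 allocation of the original — this needs a small check that the last part of \Cref{lem:sdef1-constraints} handles ties, or just handle $m<6$ separately), and also reduce to the case that all three rankings are strict (break ties arbitrarily; SD-EF1 \wrt the strict refinement implies SD-EF1 \wrt the original by the consistency part of \Cref{lem:sdef1-constraints}).

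So fix strict rankings $\mpref$ for the market and $\pref_1,\pref_2$ for the two agents on $6$ goods. The market side forces, via $C_1^{\mv},C_2^{\mv},C_3^{\mv}$ (the market's three rank-blocks of size $2$), that each agent gets exactly one good from each market-block; equivalently $\alc_1$ is a \emph{transversal} of the three market-blocks and $\alc_2$ is its complement. There are $2^3=8$ such transversals. The agent side imposes: $\alc_i$ contains at most one of $\pref_i$'s top two goods, at most one of ranks $3$--$4$, at most one of ranks $5$--$6$ — equivalently $\alc_1$ (and hence $\alc_2$) is also a transversal of agent $1$'s blocks and of agent $2$'s blocks. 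So the whole question is: given three partitions of a $6$-set into three pairs each, is there a $3$-set that is a common transversal of all three? I would argue this always exists. One clean way: the top good $g^\star$ of $\mpref$ lies in some block of each of $\pref_1$ and $\pref_2$; WLOG give $g^\star$ to agent $1$. This kills one good from each of three specified blocks (one per ranking), leaving, in each ranking, the other good of that block plus two further pairs; a short argument (or a brute-force over the $\le 4$ remaining transversals of the market blocks) shows one can always complete the transversal — and the $7$-good impossibility in \Cref{thm:sd-ef1-both-impossible} tells us exactly why the analogous claim fails one good higher, which is a useful sanity check that the $m=6$ argument is tight and not over-claiming.

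I would then just exhibit the allocation; once the transversal $\alc_1$ is found, $\alc_2 = \itms \setminus \alc_1$, and by construction each $\alc_i$ meets each block of each of the three rankings in at most one good, so \Cref{lem:sdef1-constraints} gives SD-EF1 \wrt $\mv$, \wrt $u_1$, and \wrt $u_2$ simultaneously (for \emph{any} utilities inducing those rankings). The main obstacle is the combinatorial core — showing a common transversal of three pairings of a $6$-element set always exists — and making that argument clean rather than a $8$-case enumeration; I expect the $g^\star$-to-agent-$1$ reduction above to cut the casework down to something short, possibly with one further symmetry reduction according to whether $g^\star$'s block-mates under $\pref_1$ and $\pref_2$ coincide or not.
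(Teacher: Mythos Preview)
Your reduction of SD-EF1 on the agent side to a transversal condition is where the argument breaks. \Cref{lem:sdef1-constraints} characterizes SD-EF1 for an \emph{identical} profile; with heterogeneous utilities the requirement is one-sided per agent: agent~$1$ need only not SD-envy agent~$2$ under $\sigma_1$, and agent~$2$ need only not SD-envy agent~$1$ under $\sigma_2$. That is strictly weaker than asking $A_1$ to be a transversal of both agents' block partitions. Your plan would still succeed if the stronger transversal condition could always be met, but it cannot: the three pairings
\[
\bigl\{\{1,2\},\{3,4\},\{5,6\}\bigr\},\qquad \bigl\{\{1,3\},\{2,5\},\{4,6\}\bigr\},\qquad \bigl\{\{1,4\},\{2,6\},\{3,5\}\bigr\}
\]
admit no common $3$-transversal (the only common transversals of the first two are $\{1,4,5\}$ and $\{2,3,6\}$, and each hits a pair of the third twice). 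So the combinatorial core you isolate is false, and the $g^\star$-first casework cannot be completed. For the record, with $\sigma_1 = (1\,3\,2\,5\,4\,6)$ and $\sigma_2 = (1\,4\,2\,6\,3\,5)$ realizing these pairings, the allocation $A_1=\{2,3,6\}$, $A_2=\{1,4,5\}$ \emph{is} SD-EF1 on all three sides---agent~$2$ is fine with $A_2$ even though $A_2$ is not a transversal of $\sigma_2$'s blocks, precisely because the agent-side constraint is one-sided.

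The paper avoids this trap with a cut-and-choose argument. It invokes \Cref{thm:sd-ef1-both-identical} on the pair $(\mv, u_1)$ to obtain a split $(A_1,A_2)$ that is SD-EF1 \wrt both $\mv$ and $u_1$; agent~$1$ is then content with either bundle. It remains to show agent~$2$ accepts at least one side. Writing $e_\ell = |A_1 \cap \itms_{2,\ell}| - |A_2 \cap \itms_{2,\ell}|$ for agent~$2$'s top-$\ell$ set $\itms_{2,\ell}$, agent~$2$ accepts $A_1$ if $e_\ell \ge -1$ for all $\ell$ and accepts $A_2$ if $e_\ell \le 1$ for all $\ell$; if both fail then $|e_{\ell_1}-e_{\ell_2}| \ge 4$ for some $\ell_1,\ell_2$, which forces one bundle to have size at least $4$---impossible when $m\le 6$ and the split is balanced.
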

\begin{proof}
We propose a cut-and-choose style algorithm. By invoking \Cref{thm:sd-ef1-both-identical} for $\mv$ as the market valuation and $\av_1$ as the identical utilities, we find a two-bundle partition of goods $(A_1, A_2)$ that satisfies SD-EF1 \wrt\ both $\av_1$ and $\mv$. Agent $1$ accepts both bundles $A_1$ and $A_2$ with no SD-EF1 violation. 

We will show that agent $2$ accepts at least one of the two bundles towards SD-EF1 \wrt utility $\av_2$. Write $\pi_2$ to denote the ranking induced by utility $\av_2$ and, for each index $\ell \in [m]$, let $\itms_{2, \ell}:=\{g \in \itms \mid \pi_2(g) \leq \ell \}$. That is, $\itms_{2, \ell}$ is the set of the top $\ell$ goods under agent $2$'s utility $u_2$. 

Also, write $e_{\ell} := |A_1 \cap \itms_{2, \ell}| - |A_2 \cap \itms_{2, \ell}|$ to capture the split of the top-$\ell$ goods  between the two subsets $A_1$ and $A_2$. Note that, if, for all indices $\ell \in [m]$, we have $e_{\ell} \ge -1$, then allocating $A_1$ to agent $2$ satisfies SD-EF1 for her. This follows from the fact that, in such a case, agent $2$ receives at most one item less than agent $1$ in each $\itms_{2,\ell}$. Similarly, if $e_{\ell} \le 1$ for all $\ell \in [m]$, allocating $A_2$ 
to agent $2$ satisfies SD-EF1 for her. Next, we show that at least one of the two conditions must hold. Suppose by contradiction that there exists an $\ell_1, \ell_2 \in [m]$ such that $e_{\ell_1} \ge 2$ and $e_{\ell_2} \le -2$. Since $|e_{\ell_1} - e_{\ell_2}| \ge 4$, either $|A_1| \ge 4$ or $|A_2| \ge 4$, which contradicts the fact that the bundle sizes are balanced, i.e., $\max\{|A_1|, |A_2|\} \le \lceil m / 2 \rceil \le 3$, where the last inequality  follows from the assumption that $m \le 6$. The theorem stands proved.  
\end{proof}

\begin{theorem}\label{thm:sd-ef1-both-identical}
When the subjective utilities of the agents induce the same ranking over the goods, an allocation that is SD-EF1 \wrt the subjective utilities and the market valuation always exists, and can be computed in polynomial time.
\end{theorem}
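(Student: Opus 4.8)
The plan is to cast the requirement as a degree-constrained edge coloring of a bipartite multigraph and then apply König's line-coloring theorem. First I would break ties so that the common ranking induced by the subjective utilities becomes a \emph{strict} ranking $\sigma:\itms\leftrightarrow[m]$, and the ranking induced by the market valuation $v$ becomes a \emph{strict} ranking $\pi:\itms\leftrightarrow[m]$. Each $u_i$ is then consistent with $\sigma$ (goods that $u_i$ values equally may be tie-broken either way in $\sigma$) and $v$ is consistent with $\pi$, so invoking the last part of \Cref{lem:sdef1-constraints} twice---once for the profile $(u_i)_{i\in\ags}$ with ranking $\sigma$, and once for the identical profile $(v,\dots,v)$ with ranking $\pi$---it suffices to find an allocation $\alc$ that, writing $t:=\lceil m/n\rceil$, satisfies $|\alc_i\cap C_\ell|\le 1$ and $|\alc_i\cap D_\ell|\le 1$ for every agent $i\in\ags$ and every $\ell\in[t]$, where $C_\ell:=\{a\in\itms:n\ell-n+1\le\sigma(a)\le n\ell\}$ and $D_\ell:=\{a\in\itms:n\ell-n+1\le\pi(a)\le n\ell\}$ are the $\ell$-th blocks of (up to) $n$ consecutive goods under $\sigma$ and under $\pi$, respectively.

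Next I would build a bipartite multigraph $H$ on vertex set $\{C_1,\dots,C_t\}\sqcup\{D_1,\dots,D_t\}$ with one edge $e_g$ per good $g\in\itms$, joining the block $C_\ell$ that contains $g$ to the block $D_{\ell'}$ that contains $g$. An assignment of the goods to the $n$ agents that respects the two block constraints above is then \emph{exactly} a proper edge coloring of $H$ with $n$ colors, the color of $e_g$ being the agent who receives $g$: two goods lie in a common $\sigma$-block iff the corresponding edges share their $C$-endpoint, and similarly for $\pi$-blocks and $D$-endpoints. Since $|C_\ell|\le n$ and $|D_\ell|\le n$ for every $\ell$ (the last block of each family may be smaller, never larger), every vertex of $H$ has degree at most $n$, i.e., $\Delta(H)\le n$. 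By König's edge-coloring theorem a bipartite multigraph has chromatic index exactly $\Delta(H)$, and such a coloring is computable in polynomial time (e.g., by repeatedly extracting perfect matchings); using at most $n\ge\Delta(H)$ colors, I read off the desired allocation.

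The main obstacle is spotting that the two a priori unrelated families of cardinality constraints---one per side---jointly have the structure of a \emph{bipartite} degree-bounded coloring, which König's theorem resolves optimally; this is exactly where the hypothesis that all agents induce the same ranking gets used, since it is what leaves only two block families and hence keeps the constraint graph bipartite. (With non-identical agent rankings one would get three or more block families, a hypergraph-coloring problem admitting no such guarantee, which is consistent with \Cref{thm:sd-ef1-both-impossible}.) The remaining points---that consistency, rather than inducing, suffices to invoke \Cref{lem:sdef1-constraints}, that the possibly-smaller last block in each family is covered by the same degree bound, and that bipartite edge coloring runs in polynomial time---are routine.
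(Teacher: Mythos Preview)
Your proposal is correct and takes essentially the same approach as the paper: both construct the bipartite multigraph whose vertices are the rank-blocks under the two orderings and whose edges are the goods, and then read off the allocation from an $n$-edge-coloring. The only cosmetic difference is that the paper first pads with dummy goods to make the graph $n$-regular and then decomposes into $n$ perfect matchings via Hall's theorem, whereas you invoke K\"onig's theorem directly on the graph of maximum degree at most $n$; these are equivalent formulations of the same argument.
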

\begin{proof}
If $m$ is not divisible by $n$, add dummy goods such that the number of items becomes divisible by $n$. For each dummy good both market value and subjective utility for all the agents is equal to $0$. Now, by construction, we have set $\itms'$ of $m' = nk$ items. Considering this integer $k$, construct a bipartite graph $G$ with vertices $\set{w_1, \ldots, w_k}$ on the left side and $\set{z_1, \ldots, z_k}$ on the right side.

Let $\pi_v$ be the ranking induced the market valuation $\mv$. Also, let $\pi_u$ be a ranking consistent with all the subjective utilities $u_i$. By the theorem assumption, such a ranking $\pi_u$ exists. 

For any good $a \in \itms'$, let $r_a, r'_a \in [m']$ be the ranks of $a$ \wrt\ the two orderings, respectively, i.e., $\pi_v(r_a) = \pi_u(r'_a) = a$. Then, in the bipartite graph $G$, include an edge from $w_{\lceil r_a / n \rceil}$ to $z_{\lceil r'_a / n \rceil}$ with label $a$. Note that, each vertex in $G$ is of degree $n$. The $n$-regular bipartite graph $G$ can be decomposed into $n$ perfect matchings, by Hall's theorem. Let the bundle allocated to agent $i$ be the set of goods labeled on the edges of the $i$-th matching. Note that each item is used as a label of a single edge, so the bundles are disjoint. In addition, for every index $\ell \in [k]$, each bundle has exactly one item $a$, with $r_a \in \{n\ell -n + 1, \ldots, n\ell\}$ and exactly one item $b$, with $r'_b \in \{n\ell -n + 1, \ldots, n\ell\}$. After removing the dummy items we have $|\alc_i \cap \set{a \in \itms: n\ell - n + 1 \le r_a \le n\ell}| \le 1$, and similarly $|\alc_i \cap \set{a \in \itms: n\ell - n + 1 \le r'_a \le n\ell}| \le 1$ for $\ell \in [k]$. Therefore, by \Cref{lem:sdef1-constraints}, allocation $\alc$ is SD-EF1 \wrt both market values and subjective utilities.
\end{proof}
Next, let us consider relaxing the goal of SD-EF1 to (the regular) EF1, at least on one side. Doing so on the market side leads to the following open problem, which seems to require considerably novel ideas towards a resolution. 

\begin{open}
\label{open:ef1-sdef1}
    Does there always exists an allocation that is SD-EF1 \wrt the subjective utilities and EF1 \wrt the market valuation?
\end{open}

When we relax SD-EF1 to EF1 on the agents' side, a positive result emerges, even in the general case. We remark that \citet{bu2023fair} prove the existence of an allocation that is EF1 \wrt both the subjective utilities and the market values. Upon careful observation, however, one finds that not only do they actually achieve SD-EF1 \wrt the market values, but in fact, their algorithm is precisely an instantiation of a more general algorithm designed previously by \citet{BB18}. In other words, this problem can be reduced to the result of \citet{BB18}; we lay out the simple reduction here, highlighting that there is no need to derive this result from scratch.  
 
\citet{BB18} consider a setting in which there are no market values, but we are given a partition $\set{\itms_1, \ldots, \itms_\ell}$ of the goods along with cardinality bounds $h_1, \ldots, h_\ell \in \mathbb{Z}_+$. The goal here is to find a fair allocation $\alc$ that satisfies cardinality constraints: $|A_i \cap \itms_k| \le h_k$ for each $k \in [\ell]$ and agent $i \in \ags$. Note that, we need the bound $h_k \ge \lceil |\itms_k|/n \rceil$, for each $k \in [\ell]$, to even ensure the existence of an allocation which satisfies the cardinality constraints. \citet{BB18} show that, in such a case,  an EF1 allocation satisfying the cardinality constraints is guaranteed to exist. 

For our reduction, we consider $\pi_v$ the ranking induced by the market valuation $\mv$. Then, we set $\ell := \ceil{m/n}$, and, for each $k \in [\ell]$, define $\itms_k := \set{g \in \itms: (k-1) n < \pi_v(g) \le k n}$. That is, $\itms_1$ is the set of top $n$ goods according to the market values, $\itms_2$ is the set of next $n$ goods according to the market values, and so on. Finally, we impose the cardinality constraints $|A_i \cap \itms_k| \le 1$ for all $i \in \ags$ and $k \in [\ell]$. From \Cref{lem:sdef1-constraints}, we see that this implies SD-EF1 \wrt the market values. The polynomial-time algorithm of \citet{BB18} then immediately yields an allocation that is EF1 \wrt the subjective utilities while satisfying the cardinality constraints (and, hence, SD-EF1 \wrt the market values). 

\begin{corollary}
\label{cor:ef1-sdef1}
An allocation that is EF1 \wrt the subjective utilities and SD-EF1 \wrt the market valuation always exists, and can be computed in polynomial time.
\end{corollary}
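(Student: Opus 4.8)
The plan is to reduce the problem to the constrained fair division result of \citet{BB18}, exactly along the lines sketched in the paragraph preceding the corollary. First I would fix a \emph{strict} ranking $\pi_v$ over $\itms$ that is consistent with the market valuation $\mv$ (breaking ties arbitrarily). Then, setting $\ell := \lceil m/n \rceil$, I would partition the goods into the $n$-sized blocks $\itms_k := \set{g \in \itms : (k-1)n < \pi_v(g) \le kn}$ for $k \in [\ell]$ (the last block possibly smaller), and invoke the \citet{BB18} algorithm on the instance $\langle \ags, \itms, (\av_i)_{i \in \ags}\rangle$ together with the cardinality constraints $|A_i \cap \itms_k| \le 1$ for every agent $i \in \ags$ and every block $k \in [\ell]$.

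Next I would check the two things needed to close the argument. First, feasibility of the constraints for the \citeauthor{BB18} framework: one needs $h_k \ge \lceil |\itms_k|/n \rceil$, and since $|\itms_k| \le n$ we have $\lceil |\itms_k|/n \rceil = 1 = h_k$, so the hypothesis is met, and the algorithm returns, in polynomial time, an allocation $\alc$ that is EF1 \wrt $(\av_i)_{i \in \ags}$ and satisfies all the cardinality constraints. Second, translating these cardinality constraints into SD-EF1 \wrt $\mv$: by construction the blocks $\itms_k$ are precisely the sets $\set{a \in \itms : n\ell - n + 1 \le \pi_v(a) \le n\ell}$ appearing in \Cref{lem:sdef1-constraints}, so the ``if'' direction of that lemma (applied with the strict ranking $\pi_v$ and any utility profile that \emph{induces} $\pi_v$) shows $\alc$ is SD-EF1 with respect to that induced profile; the final sentence of \Cref{lem:sdef1-constraints}, using that $\mv$ is consistent with $\pi_v$, then upgrades this to SD-EF1 \wrt $\mv$ itself. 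Combining the two observations yields the corollary.

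I do not expect a deep obstacle here; the only points requiring care are the two above — verifying the feasibility condition $h_k \ge \lceil |\itms_k|/n\rceil$ (immediate once the block sizes are at most $n$), and checking that \Cref{lem:sdef1-constraints} still applies when $\mv$ has ties, which is exactly what passing to a consistent strict ranking $\pi_v$ and invoking the consistency clause of the lemma takes care of. The polynomial running time is inherited directly from the algorithm of \citet{BB18}.
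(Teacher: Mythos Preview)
Your proposal is correct and follows essentially the same reduction to \citet{BB18} that the paper uses in the paragraph preceding the corollary. If anything, you are slightly more careful than the paper: you explicitly verify the feasibility precondition $h_k \ge \lceil |\itms_k|/n\rceil$ and you handle possible ties in $\mv$ by passing to a strict consistent ranking before invoking the consistency clause of \Cref{lem:sdef1-constraints}.
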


\section{Extensions to Other Criteria}\label{sec:extensions}

Let us now extend our setup of fair division with market values to (i) consider other desiderata such as Pareto optimality, MMS, and EFX, and (ii) go beyond additive utilities. 

\subsection{Pareto Optimality}\label{sec:indiv-PO}

\begin{definition}[Pareto Optimality]
\label{def:po}
An allocation $A$ is said to be \emph{Pareto optimal} (PO) \wrt the subjective utilities if there is no allocation $A'$ with the property that $\av_i(A'_i) \ge \av_i(A_i)$, for all $i \in \ags$, with at least one inequality being strict. 
\end{definition}

\Cref{app:fpo} addresses a strengthening called \emph{fractional Pareto optimality} (fPO), which demands that not even a fractional allocation (where the goods may be fractionally assigned to agents) Pareto dominates the given integral one. 

First, we note that every allocation is trivially PO \wrt the market values. Hence, the only interesting consideration is to demand PO \wrt the subjective utilities. Here, we show that PO cannot be achieved  with SD-EF1 (Theorem \ref{thm:po-sdef1-impossible}). Interestingly, SD-EF1 implies both EF1 and balancedness ($|A_i| \in \set{\lfloor m/n \rfloor, \ceil{m/n}}$ for all $i$), and both these relaxations can individually be attained together with PO~\cite{CKMP+19,CFS17}. 

\begin{theorem}\label{thm:po-sdef1-impossible}
There exists an instance with 2 agents and 6 goods in which there is no allocation that is PO \wrt the subjective utilities and SD-EF1 \wrt the market valuation. 
\end{theorem}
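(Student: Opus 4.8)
The plan is to exhibit an explicit instance on six goods $g_1,\dots,g_6$ in which the SD-EF1 requirement on the market side is so restrictive that one fixed (market-infeasible) allocation Pareto-dominates \emph{every} allocation meeting it. First I would fix the market valuation to induce the strict ranking $g_1 \succ g_2 \succ \cdots \succ g_6$ (say $v(g_i) = 7-i$) and group the goods into the market-value pairs $C_1 = \{g_1,g_2\}$, $C_2 = \{g_3,g_4\}$, $C_3 = \{g_5,g_6\}$. Applying \Cref{lem:sdef1-constraints} to $v$, viewed as a profile of identical valuations (as in the remark preceding that lemma), an allocation is SD-EF1 \wrt the market valuation if and only if each agent receives exactly one good from each $C_\ell$; in particular there are only eight such allocations, and in each of them agent $i$'s bundle has the form $\{x_1,x_2,x_3\}$ with $x_\ell\in C_\ell$.

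Next I would choose the subjective utilities so that the ``Pareto-efficient split'' of the goods directly conflicts with this pairing. A concrete choice is $u_1 = (10,9,8,3,2,1)$ and $u_2 = (1,2,3,8,9,10)$ over $(g_1,\dots,g_6)$. Then in any pair-respecting allocation, agent $1$ can collect at most $u_1(g_1)+u_1(g_3)+u_1(g_5) = 20$ (her favourite good from each pair) and agent $2$ can collect at most $u_2(g_2)+u_2(g_4)+u_2(g_6) = 20$. On the other hand, the allocation $A^\star = (\{g_1,g_2,g_3\},\ \{g_4,g_5,g_6\})$ gives agent $1$ value $u_1(g_1)+u_1(g_2)+u_1(g_3) = 27$ and agent $2$ value $u_2(g_4)+u_2(g_5)+u_2(g_6) = 27$. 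Since $27 > 20$ on both sides, $A^\star$ strictly Pareto-dominates every SD-EF1-\wrt-$v$ allocation, so none of them is PO \wrt the subjective utilities; this proves the theorem.

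The only genuine design step — the part I would expect to require some thought — is engineering this conflict: the utilities must make $\{g_1,g_2,g_3\}$ vs.\ $\{g_4,g_5,g_6\}$ the obviously efficient division while forcing any pair-respecting allocation to cut exactly that division in half for each agent (here, each agent is compelled to trade one good of her favourable triple for a much worse good of the remaining pair, namely $g_2 \leftrightarrow g_5$). The payoff of setting it up this way is the uniform value bound ``$\le 20$ on each side,'' which lets a \emph{single} dominating allocation $A^\star$ handle all eight pair-respecting allocations at once and avoids any explicit case analysis.
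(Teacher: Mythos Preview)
Your proof is correct, and it takes a genuinely different route from the paper's. The paper chooses asymmetric utilities (agent~1: $(19,7,4,3,2,1)$; agent~2: $(8,7,6,5,4,3)$; market $=u_1$), then argues step by step: PO forces $g_1$ to agent~1, SD-EF1 then forces $g_2$ to agent~2, and finally it checks that in every remaining pair-respecting completion the two agents would both prefer to swap agent~1's share of $\{g_3,\dots,g_6\}$ for $\{g_2\}$, so no completion is PO. Your construction instead uses the mirrored profile $u_1=(10,9,8,3,2,1)$, $u_2=(1,2,3,8,9,10)$ and exploits that, under the pair constraint, each agent can score at most $20$, while the single allocation $A^\star=(\{g_1,g_2,g_3\},\{g_4,g_5,g_6\})$ gives each agent $27$; one dominating allocation disposes of all eight SD-EF1 allocations at once. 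What the paper's approach buys is that it works even with $v=u_1$ (so the market valuation need not be a third, independent function), which slightly sharpens the negative result; what your approach buys is a shorter argument with no case analysis and no reasoning about which trades both agents would accept.
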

\begin{proof}
Consider an instance with two agents with the following subjective utilities for goods $g_1,\ldots,g_6$:
\begin{table}[htb!]
\centering
\renewcommand{\arraystretch}{1.2}
\begin{tabular}{c|c c c c c c}
& $g_1$ & $g_2$ & $g_3$ & $g_4$ & $g_5$ & $g_6$\\
\hline
Agent $1$ & 19 & 7 & 4 & 3 & 2 & 1\\
Agent $2$ & 8 & 7 & 6 & 5 & 4 & 3\\
\hline
\end{tabular}
\end{table}

In addition, let the additive market valuation be identical to the utility of the first agent, i.e., $\mv=\av_1$. Note that both agents must receive three goods, respectively, to ensure SD-EF1 (across the six goods) \wrt the market valuation. 

Since agent $1$ would be happy to give up any three goods to receive $g_1$ and agent $2$ would be happy to accept any three goods in exchange for $g_1$, PO demands that $g_1$ must be given to agent $1$. Then, for SD-EF1, $g_2$ must be given to agent $2$. 

Among the remaining four goods $\set{g_3,g_4,g_5,g_6}$, each agent must receive two goods in such a way that no agent gets both $g_3$ and $g_4$ (or equivalently, both $g_5$ and $g_6$). However, it is easy to see that agent $1$ would be happy to give up $A_1 \cap \set{g_3,g_4,g_5,g_6}$ to receive $\set{g_2}$, and agent $2$ would also be happy to give up $\set{g_2}$ to receive $A_1 \cap \set{g_3,g_4,g_5,g_6}$, violating PO.
\end{proof}

This leads us, in the PO context, to the following possibility: EF1+PO \wrt the subjective utilities and EF1 \wrt the market values. We leave this as an open question.

\begin{open}
\label{open:ef1po-ef1}
    Does there always exists an allocation that is EF1+PO \wrt the subjective utilities and EF1 \wrt the market values (or even balanced)? 
\end{open}

Finally, relaxing SD-EF1 to EF1 \wrt the market values, we show that PO \wrt the subjective utilities can be attained together with EF1 \wrt the market values. Note that if the market values of all goods are set to be equal, then EF1 is equivalent to balancedness; thus, our result strictly strengthens the known result that a balanced PO allocation always exists~\citet{CFS17}. 

In fact, we prove a slightly more general result, whereby we allow the market values to heterogeneous (where each agent $i$ has an additive market valuation $\mv_i$) and achieve \emph{equitability up to one good} (EQ1)~\cite{freeman2019} with respect to it. 

\begin{definition}[EQ1]
    An allocation $\alc$ is called equitable up to one good (EQ1) \wrt a valuation profile $\mvs = \left( v_i \right)_{i \in \ags}$ if, for every pair of agents $i,j \in \ags$, either $\alc_j = \emptyset$ or $\mv_i(\alc_i) \ge \mv_j(\alc_j \setminus \set{g})$ for some good $g \in \alc_j$. 
\end{definition}

\citet{freeman2019} prove that an allocation satisfying EQ1+PO does not always exist, but it does (and even EQ1+fPO can be guaranteed) when all values are nonzero (i.e., strictly positive). To obtain EQ1 and PO \wrt two different valuation profiles, respectively, it is clear that the profile \wrt which we want PO must have nonzero values.\footnote{Otherwise, consider an example in which three goods are positively valued only by agent $1$, and a fourth good is positively valued only by agent $2$. The only PO allocation is to give the first three goods to agent $1$ and the fourth good to agent $2$. However, this is not EQ1 \wrt a different valuation profile where all agents value all goods at $1$.} We show that PO (even fPO) \wrt a nonzero valuation profile and EQ1 \wrt a different valuation profile can be attained, strictly strengthening the result of \citet{freeman2019}.

\begin{restatable}{theorem}{eqonepo}
\label{thm:eq1po}
An allocation that is fPO \wrt nonzero subjective utilities $(\av_i)_{i \in \ags}$ and EQ1 \wrt (heterogeneous) market valuations $\left(\mv_i \right)_{i \in \ags}$ always exists, and can be computed in time polynomial in $n$, $m$, and $V$, where $V = \max_{i \in \ags} |\set{\mv_i(S) : S \subseteq \itms}|$ is the maximum number of distinct value levels for any $i$.
\end{restatable}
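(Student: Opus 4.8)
The plan is to adapt the price-based local-search technique used for EF1{+}fPO (due to Barman, Krishnamurthy, and Vaish), exploiting the fact that here the subjective utilities $(\av_\ag)_{\ag\in\ags}$ and the market valuations $(\mv_\ag)_{\ag\in\ags}$ play completely separate roles: the former only governs fractional Pareto optimality, while the latter only governs the EQ1 guarantee. The starting point is the standard characterization of fPO for nonzero additive valuations: an allocation $\alc$ is fPO \wrt $(\av_\ag)_{\ag\in\ags}$ if and only if there is a price vector $\pp$ with $p_\itm>0$ for every $\itm$ such that each agent holds only \emph{maximum bang-per-buck} (MBB) goods, i.e.\ $\av_\ag(\itm)/p_\itm = \alpha_\ag := \max_{\itmm\in\itms}\av_\ag(\itmm)/p_{\itmm}$ for all $\itm\in\alc_\ag$ (goods valued at $0$ by every agent are inert and can be appended arbitrarily at the end; nonzero utilities ensure every $\alpha_\ag>0$). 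We will maintain such a certified pair $(\alc,\pp)$ as an invariant, so fPO on the subjective side is automatic and the only thing left to do is to drive the allocation toward EQ1 on the market side. As a starting allocation one can take any utilitarian-optimal allocation \wrt $(\av_\ag)_{\ag\in\ags}$ (which is fPO), with supporting MBB prices obtained from LP duality.

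I would then run the following loop. If $\alc$ is already EQ1 \wrt $(\mv_\ag)_{\ag\in\ags}$, stop. Otherwise, observe that EQ1 is equivalent to requiring that the agent $\ag^\star \in \argmin_{\ag} \mv_\ag(\alc_\ag)$ is not ``EQ1-envied'' by anyone, so we can pick a witness agent $\agg$ with $\mv_\agg(\alc_\agg\setminus\set{\itm}) > \mv_{\ag^\star}(\alc_{\ag^\star})$ for \emph{every} $\itm\in\alc_\agg$ (note $|\alc_\agg|\ge 2$). The goal of one iteration is to transfer a single good ``down to'' $\ag^\star$ while preserving the invariant. Form the MBB graph on $\ags\cup\itms$ with allocation edges ($\ag\sim\itm$ for $\itm\in\alc_\ag$) and MBB edges ($\ag\sim\itm$ when $\av_\ag(\itm)/p_\itm=\alpha_\ag$), and let $R$ be the set reachable from $\ag^\star$ via alternating paths. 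If the witness $\agg$ (or any agent still violating EQ1 against $\ag^\star$) lies in $R$, transfer goods one step along an alternating $\ag^\star$–$\agg$ path: each traversed good lands with an owner for whom it is an MBB good, and deleting a good never destroys MBB-ness, so $(\alc,\pp)$ remains a valid fPO-certified allocation. If no violating agent lies in $R$, raise the prices of all goods in $R$ by the smallest common factor $\beta>1$ that makes some good outside $R$ become MBB for an agent inside $R$; this strictly enlarges $R$ and preserves the invariant (prices within $R$ scale uniformly, outside quantities are untouched, and market values $\mv_\ag(\cdot)$ are unaffected). Since each price rise adds at least one good to $R$, after at most $m$ of them $R$ contains all goods, hence all owners of goods, hence $\agg$; so each iteration terminates in $\poly(n,m)$ time.

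The step I expect to be the main obstacle is the termination/complexity analysis of the outer loop, i.e.\ showing it halts after $\poly(n,m,V)$ iterations. The obvious potential $\min_\ag \mv_\ag(\alc_\ag)$ need not strictly increase after a transfer: the good moved onto $\ag^\star$ may have $\mv_{\ag^\star}$-value $0$, and an intermediate agent on the alternating path can even lose market value. One therefore needs a finer, leximin-type potential on the sorted vector $\big(\mv_\ag(\alc_\ag)\big)_{\ag\in\ags}$ together with a suitable tie-breaking quantity (controlling how ``spread out'' the bundles of the rich agents are), exactly as in the analysis of \citet{freeman2019}. The key point --- and the source of the strengthening over \citet{freeman2019} --- is that this analysis only ever looks at the numbers $\mv_\ag(\alc_\ag)$ and is oblivious to whether the profile certifying fPO coincides with the profile used for EQ1; moreover each $\mv_\ag(\alc_\ag)$ takes one of at most $V$ values, which bounds the potential and hence the number of outer iterations by $\poly(n,m,V)$. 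When the loop exits we have, by the invariant, an allocation that is fPO \wrt $(\av_\ag)_{\ag\in\ags}$ and, by the exit condition, EQ1 \wrt $(\mv_\ag)_{\ag\in\ags}$; specializing $\mv_\ag=\mv$ recovers the market-values statement.
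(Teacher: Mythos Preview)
Your proposal is correct and essentially coincides with the paper's approach: the paper also runs the BKV18-style Fisher-market local search, with prices and MBB edges defined via the subjective utilities $(\av_\ag)$ so that fPO is certified throughout, while the Phase-2 transfers and Phase-3 price rises are driven by EQ1 violations measured with the market valuations $(\mv_\ag)$, and it likewise defers the termination analysis to the arguments of \citet{freeman2019} and \citet{murhekar2021fair}, which indeed only track the numbers $\mv_\ag(\alc_\ag)$ and hence carry over verbatim. The only cosmetic differences are that the paper works with the full set $L=\argmin_\ag \mv_\ag(\alc_\ag)$ and its hierarchy $\hierarchy_L$ rather than a single $\ag^\star$, and transfers only the last good on a shortest violating MBB path per step.
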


Since both profiles are heterogeneous, one can swap them to achieve EQ1 \wrt subjective utilities and fPO \wrt heterogeneous nonzero market values as well. Since EQ1 and EF1 coincide when the market valuations  are identical, $\mv_i = \mv$ for  $i$, which is the case we consider throughout the paper, we obtain the following corollary.

\begin{corollary}
\label{cor:ef1-fpo}
There always exist an allocation that is fPO \wrt nonzero subjective utilities and EF1 \wrt the market valuation.
\end{corollary}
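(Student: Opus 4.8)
The plan is to obtain the corollary as an immediate specialization of \Cref{thm:eq1po}. First I would instantiate the heterogeneous market valuation profile $(\mv_i)_{i \in \ags}$ by setting $\mv_i := \mv$ for every agent $i \in \ags$, so that it collapses to the single market valuation $\mv$ used throughout the paper; the nonzero hypothesis required by \Cref{thm:eq1po} is imposed only on the subjective utilities $(\av_i)_{i \in \ags}$, which are left untouched, so the hypothesis of \Cref{thm:eq1po} is met. Applying that theorem then yields an allocation $\alc$ that is fPO \wrt the nonzero subjective utilities and EQ1 \wrt the constant profile $(\mv, \ldots, \mv)$, within the stated running time.

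Next I would verify that EQ1 \wrt an identical valuation profile is literally EF1 \wrt that common valuation. Unpacking the definition of EQ1: for every ordered pair $i,j$ with $\alc_j \neq \emptyset$, there is a good $g \in \alc_j$ with $\mv_i(\alc_i) \ge \mv_j(\alc_j \setminus \set{g})$; substituting $\mv_i = \mv_j = \mv$ gives $\mv(\alc_i) \ge \mv(\alc_j \setminus \set{g})$, which is exactly the EF1 condition \wrt $\mv$. Hence $\alc$ is EF1 \wrt the market valuation, and together with fPO \wrt the subjective utilities this is precisely the claim. The polynomial-time bound transfers verbatim (with $V = |\set{\mv(S) : S \subseteq \itms}|$).

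There is essentially no obstacle here, as the argument is a one-line reduction; the only points worth stating explicitly are that the positivity requirement lands on the side where it is available (the subjective utilities) and that EF1 \wrt $\mv$ places no positivity demand on $\mv$, so no extra assumption is smuggled in.
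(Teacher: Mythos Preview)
The proposal is correct and follows exactly the paper's own derivation: instantiate \Cref{thm:eq1po} with $\mv_i = \mv$ for all $i$, then use the observation (stated verbatim in the paragraph preceding the corollary) that EQ1 and EF1 coincide under an identical valuation profile. Your extra remarks about where the positivity assumption lands and about the running-time bound are accurate but go slightly beyond what the corollary asserts (it claims only existence).
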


\subsection{MMS and EFX}\label{sec:mms-efx}
In addition to EF1, two fairness criteria that have received significant attention in the discrete fair division are \emph{maximin share fairness} (MMS) and \emph{envy-freeness up to any good} (EFX). Here, we show that these notions lead to stark impossibilities in our setup of fair division with market values. 

\begin{definition}[$\alpha$-MMS]
    Under utilities $\left( \av_i \right)_{i \in \ags}$, the maximin share of agent $i$ is defined as $\mu_i := \max_{(B_1, \ldots, B_n)} \ \min_{k \in [n]} \av_i(B_k)$; here the $\max$ is considered over all $n$-partitions of $\itms$. For $\alpha \in [0,1]$, an allocation $A$ is said to be $\alpha$-MMS, \wrt subjective utilities, if $\av_i(A_i) \ge \alpha \cdot \mu_i$ for all agents $i \in \ags$. 
\end{definition}

We know that EF1 implies $(\nicefrac{1}{n})$-MMS~\cite{CKMP+19}. Hence, in the guarantee of \Cref{cor:ef1-sdef1} (EF1 \wrt the subjective utilities and SD-EF1 \wrt the market values), the fairness property on one or both sides can be replaced by $(\nicefrac{1}{n})$-MMS. However, the following result shows that this is the best one can hope for, despite the fact that in the traditional fair division setup, even $\left(\frac{3}{4}+ \frac{1}{12n} \right)$-MMS is known to be exist~\cite{AG24}. 

\begin{theorem}\label{thm:mms}
    For any $\alpha > \nicefrac{1}{n}$, there exists an instance with two identical valuation profiles (either one can be viewed as consisting of identical subjective utilities and the other as consisting of market values) in which no allocation is simultaneously $\alpha$-MMS \wrt the first profile and either EF1 or $\alpha$-MMS \wrt the second, 
\end{theorem}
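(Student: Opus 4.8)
The plan is to construct a single explicit instance (depending on $n$ and $\alpha$) in which being $\alpha$-MMS \wrt the first profile is already rigid enough---even for $\alpha$ barely above $1/n$---to force an allocation that is grossly unbalanced under the second profile. Take $n$ agents and $2n-1$ goods, partitioned into $n-1$ ``$a$-goods'' $a_1,\dots,a_{n-1}$ and $n$ ``$s$-goods'' $s_1,\dots,s_n$. For the first (identical) profile $u$, set $u(a_i)=N$ and $u(s_j)=1$; for the second (identical) profile $v$, set $v(a_i)=1$ and $v(s_j)=M$. Here $N$ and $M$ are chosen large (it suffices that $N\ge n$ and $M>n(n-1)$). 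The two profiles disagree completely on which goods are valuable, and the two special groups have deliberately different sizes.

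A short computation fixes the maximin shares: since only $n-1$ goods are $u$-heavy, in any $n$-partition some part contains no $a$-good and hence has $u$-value at most $n$, while the partition $(\{a_1\},\dots,\{a_{n-1}\},\{s_1,\dots,s_n\})$ attains $n$, so $\mu_u=n$; an averaging argument gives $\mu_v=M$. The heart of the proof is then the following claim: \emph{every} allocation $A$ that is $\alpha$-MMS \wrt $u$ leaves some agent with no $s$-good. Suppose not; since there are exactly $n$ $s$-goods and $n$ agents, each agent then holds exactly one $s$-good, but by pigeonhole some agent holds no $a$-good, so that agent has $u$-value exactly $1<\alpha n=\alpha\mu_u$, contradicting $\alpha$-MMS. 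This is the only place the precise group sizes matter, and it is robust as $\alpha\downarrow 1/n$ because all we use is $\alpha n>1$.

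It remains to convert the claim into failure on the $v$-side. Fix $A$ that is $\alpha$-MMS \wrt $u$, and let $i^\star$ be an agent with no $s$-good, so $v(A_{i^\star})\le n-1$. On one hand, $v(A_{i^\star})\le n-1<\alpha M=\alpha\mu_v$, so $A$ is not $\alpha$-MMS \wrt $v$. On the other hand, $i^\star$ holds none of the $n$ $s$-goods, so by pigeonhole some other agent $j$ holds at least two of them; then $v(A_j)\ge 2M$, and removing $j$'s most valuable good (an $s$-good, of value $M$) still leaves $v$-value at least $M>n-1\ge v(A_{i^\star})$, so $i^\star$ envies $j$ by more than one good and $A$ is not EF1 \wrt $v$. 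Hence no allocation is simultaneously $\alpha$-MMS \wrt $u$ and either EF1 or $\alpha$-MMS \wrt $v$.

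The part I expect to require the most care is arranging the instance so that both sides are genuinely constrained at once: making $\alpha$-MMS \wrt the first profile restrictive pushes one toward concentrating that profile's value on few goods, whereas making ``some agent is $v$-poor'' an actual violation of $\alpha$-MMS \wrt the second profile requires $\mu_v$ to be large, i.e.\ that profile's value to be \emph{spread out}. Splitting the goods into disjoint $u$-heavy and $v$-heavy blocks of sizes $n-1$ and $n$ respectively is what lets both demands coexist; the asymmetry $n-1$ vs.\ $n$ (rather than $n$ vs.\ $n$) is the single delicate design choice, and it is exactly what powers the pigeonhole in the key claim.
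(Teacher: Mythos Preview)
Your proof is correct and uses essentially the same construction as the paper: $2n-1$ goods split into blocks of sizes $n-1$ and $n$, with each profile concentrating its value on a different block so that a pigeonhole on the size-$(n-1)$ block forces some agent to be poor under one profile. The paper's version is marginally slicker in two respects: it sets $v$ to be $0$ on the $(n-1)$-block, which makes ``EF1 or $\alpha$-MMS w.r.t.\ $v$'' immediately force each agent to hold exactly one good from the size-$n$ block, and it then reads off the $u$-side violation directly (so only one failure needs to be checked rather than separately verifying both the $\alpha$-MMS and the EF1 failures on the second side as you do). Conversely, your argument works in the contrapositive direction---assume $\alpha$-MMS on the first profile and derive failure on the second---which requires the extra pigeonhole step to find an agent with two $s$-goods for the EF1 violation, but avoids using zero values. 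One small imprecision: ``an averaging argument gives $\mu_v=M$'' is not quite right (averaging gives only $\mu_v\le M+(n-1)/n$); the clean justification is that any $n$-partition has a bundle with at most one $s$-good, and the partition giving one $s$-good per bundle attains minimum $M$. In any case only $\mu_v\ge M$ is needed for your argument.
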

\begin{proof}
    Consider an instance with $n$ agents and $2n-1$ goods, where the goods are partitioned into two sets, $\itms_1$ with $n$ goods and $\itms_2$ with $n-1$ goods. Consider two identical valuation profiles with functions $\av$ and $\mv$ as follows:
    \begin{align*}
        & \av(\itm) = \begin{cases}
        1 & \itm \in \itms_1 \\
        n & \itm \in \itms_2
        \end{cases},
        &\mv(\itm) = \begin{cases}
        1 & \itm \in \itms_1 \\
        0 & \itm \in \itms_2
        \end{cases}.
    \end{align*}
    Note that EF1 or $\alpha$-MMS (for any $\alpha > 0$) \wrt $\mv$ requires that each agent receive exactly one good from $\itms_1$. Since $|\itms_2| < n$, there exists an agent $i \in \ags$ who does not receive any good from $\itms_2$. For this agent, $\av_i(\alc_i) = 1$ whereas, under $\av$, the maximin share $\mu_i = n$, causing a violation of $\alpha$-MMS \wrt $\av$, for any $\alpha > \nicefrac{1}{n}$.
\end{proof}

\begin{definition}[$\alpha$-EFX]
For $\alpha \in [0,1]$, allocation $\alc$ is said to be $\alpha$-envy-free up to any good ($\alpha$-EFX), \wrt the subjective utilities, if, for every pair of agents $i,j \in \ags$, either $\alc_j = \emptyset$ or $\av_i(\alc_i) \ge \alpha \cdot \av_i(\alc_j \setminus \set{g})$ for every $g \in \alc_j$.
\end{definition}

\begin{theorem}\label{thm:efx}
For any $\alpha \in (0,1]$, there exists an instance with two identical valuation profiles (either one can be viewed as consisting of identical subjective utilities and the other as market values) in which no allocation is simultaneously EF1 \wrt the first profile and $\alpha$-EFX \wrt the second.
\end{theorem}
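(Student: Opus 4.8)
The plan is to exhibit a small instance --- two agents and four goods suffice, and the construction extends verbatim to $n$ agents and $2n$ goods --- built from the following tension: make one identical profile \emph{uniform}, so that EF1 with respect to it forces a perfectly balanced allocation, and make the other identical profile have a single \emph{dominant} good whose value swamps all others. In a balanced allocation the holder of the dominant good is necessarily stuck with a worthless ``filler'' good alongside it, and removing that filler exposes a huge residual bundle that no other agent can come within an $\alpha$ factor of --- breaking $\alpha$-EFX.

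Concretely, I would take $\ags = \set{1,2}$ and goods $\set{a, b_1, b_2, b_3}$. Let the first profile be the identical utility $u$ with $u(g) = 1$ for every good $g$, and let the second profile be the identical valuation $v$ with $v(a) = M$ for a parameter $M > 2/\alpha$ (which may be taken integral) and $v(b_i) = 1$ for $i \in \set{1,2,3}$. Both profiles are identical across agents, so either one may be designated as the subjective utilities and the other as the market values. The argument then proceeds in three short steps. First, since $u$ is uniform and $m = 4 > 2 = n$, an allocation is EF1 with respect to $u$ if and only if the two bundle sizes differ by at most one and neither bundle is empty; with four goods this forces $|\alc_1| = |\alc_2| = 2$. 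Second, in any such balanced allocation $a$ lies in one bundle, say $\alc_1 = \set{a, b}$ for some $b \in \set{b_1,b_2,b_3}$, and then $\alc_2$ consists of the two remaining unit goods, so $v(\alc_2) = 2$. Third, I test $\alpha$-EFX with respect to $v$ for the pair (agent $2$, agent $1$): removing the filler $b$ from $\alc_1$ leaves $v(\alc_1 \setminus \set{b}) = v(\set{a}) = M$, so $\alpha$-EFX would demand $2 = v(\alc_2) \ge \alpha M$, contradicting $M > 2/\alpha$. Hence no EF1-with-respect-to-$u$ allocation is $\alpha$-EFX with respect to $v$.

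For general $n$ I would instead use $n$ agents and $2n$ goods (one dominant good $a$ with $v(a) = M$ and $2n-1$ unit goods), where EF1 with respect to the uniform $u$ forces every bundle to have size exactly two; thus $a$ again sits next to a removable unit filler while some other agent holds two unit goods of total $v$-value $2$, and the same inequality $M > 2/\alpha$ does the job.

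The only delicate point --- and really the heart of the construction --- is the first step: the number of goods must be chosen large enough that a balanced allocation cannot award the dominant good $a$ to an agent \emph{all by itself}. With only three goods, the allocation giving $a$ to one agent alone and the two unit goods to the other is simultaneously EF1 with respect to $u$ and $\alpha$-EFX with respect to $v$ (removing $a$ from a singleton bundle leaves nothing), so the impossibility genuinely needs the fourth good (respectively, $2n$ goods) precisely to guarantee that the holder of $a$ must keep a removable filler beside it. Everything else is a one-line inequality check.
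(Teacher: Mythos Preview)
Your construction is correct and is essentially identical to the paper's own proof: the paper also uses two agents, four goods, a uniform profile $u\equiv 1$, and a second profile with one dominant good of value $2/\alpha+1$ and three unit goods, then argues that EF1 \wrt $u$ forces a $(2,2)$ split so the dominant good must sit next to a removable unit filler. Your extension to $n$ agents and your remark on why three goods would not suffice are nice additions beyond what the paper provides.
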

\begin{proof}
    Consider an instance with $2$ agents, $4$ goods $\set{g_1,g_2,g_3,g_4}$, and two identical valuation profiles with valuation functions $\av$ and $\mv$ as follows: $\av(g_t) = 1$ for all $t \in [4]$, $\mv(g_1) = 2/\alpha+1$, and $\mv(g_t) = 1$ for $t \in \set{2,3,4}$. 

    Note that EF1 \wrt $\av$ requires that each agent receive two goods. Consider any allocation $\alc$. Without loss of generality, assume that agent $1$ receives $g_1$ and $g_2$. Then, $\mv(\alc_2) = 2$ whereas $\alpha \  \mv(\alc_1 \setminus \set{g_2}) = 2+\alpha > 2$, demonstrating a violation of $\alpha$-EFX.
\end{proof}

\begin{remark}
     Since EFX implies EF1, we can replace EF1 with EFX in \Cref{thm:mms} to deduce that EFX \wrt one profile and $\alpha$-MMS (for any $\alpha > \frac{1}{n}$) \wrt to the other cannot be guaranteed.
\end{remark}

\subsection{Monotone Valuations}\label{sec:monotone}
This section addresses whether one can extend the positive result of \Cref{cor:ef1-sdef1} to hold beyond additive valuations. That is, we consider whether EF1 can be achieved simultaneously for monotone (but not necessarily additive) subjective utilities and market valuation.\footnote{SD-EF1 is not well-defined beyond additive valuations: there is no natural ordering over individual goods induced by a monotone valuation, and even if one were to induce it somehow, it would not suffice to guarantee EF1 \wrt every monotone valuation inducing it. Hence, we limit our attention to EF1.} 

We are interested in the broad classes of monotone, subadditive valuations. A valuation function $f : 2^\itms \to \bbR$ is said to be monotone if $f(X) \le f(Y)$, for all  $X \subseteq Y \subseteq \itms$, and subadditive if $f(X \cup Y) \le f(X) + f(Y)$ for all $X,Y \subseteq \itms$. 

An allocation that is EF1 \wrt monotone subjective utilities is guaranteed to exist~\cite{LMMS04}. Notably, with market values, the question remains open. 

\begin{open}
    Does there always exist an allocation that is EF1 \wrt two heterogeneous monotone valuation profiles simultaneously (i.e., \wrt monotone subjective utilities and heterogeneous monotone market values)? What if one of the profiles consisted of additive valuations?
\end{open}

In fact, consider the special case of (homogeneous) additive market values where every good has the same market value. Then, EF1 \wrt such market values is simply balancedness, yielding the following tantalizing open question for the usual fair division setup without market values. 

\begin{open}
    Does there always exist a balanced EF1 allocation \wrt a (heterogeneous) monotone valuation profile?
\end{open}

While these questions remain open, we prove the existence of an allocation that is $\nicefrac{1}{2}$-EF1 \wrt subadditive subjective utilities and SD-EF1 \wrt additive market values. Formally, for $\alpha \in [0,1]$, an allocation $A$ is said to be $\alpha$-EF1 \wrt subjective utilities $\left( \av_i \right)_{i\in \ags}$ if, for each pair of agents $i,j \in \ags$ (with $A_j \neq \emptyset$), we have $\av_i(A_i) \geq \alpha \ \av_i(A_j \setminus \{g\})$ for some good $g \in A_j$. Here, we use the idea of {minimally envied subsets} from \cite{chaudhury2021little}. \\

\noindent
{\bf Minimal Envied Swap (MES) Algorithm.} As we did for establishing \Cref{cor:ef1-sdef1}, we consider $\pi_v$ the ranking induced by the (additive) market valuation $\mv$. Then, we partition the items into $(\itms_1,\ldots,\itms_{\ceil{m/n}})$ such that $\itms_k := \set{g \in \itms : (k-1)n < \pi_\mv(g) \le kn}$. As noted previously, for any allocation $A$, if $|A_i \cap \itms_k| \le 1$ for each $k \in [\ceil{m/n}]$ and all agents $i$, then $A$ is SD-EF1 \wrt $v$. We now provide an algorithm that finds such an allocation that is $\nicefrac{1}{2}$-EF1 \wrt monotone, subadditive subjective utilities.

We start with an allocation $A$ with empty bundles, $A_i = \emptyset$, for all agents $i \in \ags$. Also, initialize  `charity' $C = \itms$ as the set of all the unallocated goods. A subset $T \subseteq C$ is said to be an \emph{envied feasible} subset of the charity if (1) there exists an agent $i \in \ags$ envying it, i.e., $u_i(A_i) < u_i(T)$, and (2) $|T \cap \itms_k| \le 1$ for each $k \in [\ceil{m/n}]$. Further, we say that an envied feasible subset $T \subseteq P$ is \emph{minimal} if no agent envies any strict subset of $T$. It is easy to verify that while there exists an envied feasible subset of the charity, there also exists a minimal envied feasible subset of the charity. The algorithm now works in two phases.

\noindent\textbf{Phase 1:} While there exists a minimal envied feasible subset $T$ of the charity $C$, select an agent $i$ who envies $T$, and update $A_i = T$ and charity (unallocated good) $C =  \itms \setminus \left(\cup_{i\in \ags} A_i \right)$.
    
\noindent\textbf{Phase 2:} Once Phase 1 has terminated, allocate goods in $C$ arbitrarily subject to maintaining $|A_i \cap \itms_k| \le 1$ for all agents $i \in \ags$ and each $k \in [\ceil{m/n}]$. 

\begin{theorem}
\label{theorem:mono-half-EF1}
Every fair division instance with (monotone) subadditive subjective utilities and additive market values admits an allocation that is $\nicefrac{1}{2}$-EF1 \wrt subjective utilities and SD-EF1 \wrt market valuation. Further, such an allocation is returned by the MES algorithm detailed above. 
\end{theorem}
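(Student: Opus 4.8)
The plan is to prove the two halves of the statement — SD-EF1 \wrt the market valuation $v$, and $\nicefrac12$-EF1 \wrt the subjective utilities — separately, with essentially all the work in the second half. For the first half, I would observe that the MES algorithm maintains the invariant that $|A_i \cap \itms_k| \le 1$ for every agent $i$ and every block $k \in [\ceil{m/n}]$: this holds vacuously for the empty initial bundles, it is preserved in Phase~1 because only \emph{feasible} bundles are ever assigned, and it is explicitly preserved in Phase~2. Since the final allocation is a partition of $\itms$ satisfying this cardinality constraint with respect to the blocks $\itms_k$ induced by a strict ranking $\pi_v$ consistent with $v$, \Cref{lem:sdef1-constraints} (via its ``consistency'' clause, exactly as in the reduction preceding \Cref{cor:ef1-sdef1}) immediately yields SD-EF1 \wrt $v$.

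For the $\nicefrac12$-EF1 half I would first dispatch the routine preliminaries. Phase~1 terminates: each time an agent $i$ is selected it strictly increases $u_i(A_i)$ (it envied the new bundle), while no other bundle changes, so every $u_j(A_j)$ is non-decreasing throughout the run of Phase~1; as each $u_i$ takes finitely many values, each agent is selected finitely often, so Phase~1 halts. Phase~2 is well-defined: within each block $\itms_k$ (of size at most $n$) the number of agents currently holding no good of $\itms_k$ is at least the number of goods of $\itms_k$ still in the charity, so the leftovers of $\itms_k$ can be handed out one per agent while keeping the invariant.

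The core of the argument is then the following. Let $A$ be the allocation when Phase~1 ends, $C$ the remaining charity, and $A' \supseteq A$ the final allocation, writing $D_j := A'_j \setminus A_j \subseteq C$; each $D_j$ is itself a feasible subset of $C$. Since Phase~1 ended with no minimal — hence no — envied feasible subset of $C$, we get $u_i(A_i) \ge u_i(D_j)$ for all $i,j$. Next, fix $j$ with $A_j \neq \emptyset$ and let $t_j$ be the last time $j$ was selected: at $t_j$ the bundle $A_j$ was a \emph{minimal} envied feasible subset, so no agent envied $A_j \setminus \set{g}$ for any $g \in A_j$; combined with the monotonicity of utilities over the course of Phase~1 (established above), $u_i(A_i) \ge u_i(A_j \setminus \set{g})$ for every agent $i$ and every $g \in A_j$. (If $A_j = \emptyset$ then $A'_j = D_j$ and $u_i(A'_i) \ge u_i(A_i) \ge u_i(D_j) = u_i(A'_j)$ gives outright envy-freeness.) Now pick any $g \in A_j$; since $A'_j \setminus \set{g} = (A_j \setminus \set{g}) \cup D_j$ and $u_i$ is subadditive,
\[
u_i(A'_j \setminus \set{g}) \le u_i(A_j \setminus \set{g}) + u_i(D_j) \le 2\,u_i(A_i) \le 2\,u_i(A'_i),
\]
which is exactly $\nicefrac12$-EF1 with witness good $g$.

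I expect the main obstacle to be making the minimality argument survive the passage to the final allocation: one has to be careful that an agent's bundle only improves (in its own valuation) during Phase~1, so that the ``no agent envies a strict subset of $A_j$ at time $t_j$'' condition still binds at termination, and then to notice that the goods added in Phase~2 form, for each agent, a set $D_j$ that the agent already weakly prefers its own bundle to — so subadditivity absorbs them at the cost of only a factor of $2$. The termination argument, the feasibility of Phase~2, and the SD-EF1 half are all straightforward bookkeeping by comparison.
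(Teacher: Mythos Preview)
Your proposal is correct and follows essentially the same approach as the paper: maintain the cardinality invariant for SD-EF1 \wrt $v$, argue that the Phase~1 allocation is EF1 (you do this by looking at the last time each agent was selected and using that utilities are non-decreasing, the paper phrases it as EF1 being preserved at each swap --- these are equivalent), observe that the Phase~2 additions $D_j$ are feasible subsets of the charity and hence un-envied, and conclude via subadditivity. Your writeup is in fact more careful than the paper's in spelling out the subadditivity step and the $A_j=\emptyset$ case.
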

\begin{proof}
First, we argue that the algorithm terminates and returns an allocation that is SD-EF1 \wrt the market values. Note that each iteration of Phase 1 strictly increases the utility of the selected agent, and the utilities of the remaining agents remain unchanged. Hence, Phase 1 must terminate. Further, throughout the execution of Phase 1, the selected subset of goods $T$ is feasible, i.e., throughout, we maintain $|A_i \cap \itms_k| \le 1$ for each $i \in \ags$ and $k \in [\ceil{m/n}]$. Therefore, at the end of Phase 1 and for each $k$, the number of agents with exactly one good from $\itms_k$ is equal to $|\itms_k \setminus C|$. This observation implies that there remain $n - |\itms_k \setminus C|$ agents who can each still receive a good from $\itms_k$, while maintaining the cardinality constraint. Since $|\itms_k| \leq n$, we obtain the following bound for the unassigned goods $|\itms_k \cap C| \leq n - |\itms_k \setminus C|$. Hence, we can allocate the unassigned goods in $ \itms_k \cap C$, one each, to the remaining agents while maintaining the feasibility constraints. As noted above, this ensures SD-EF1 \wrt the market values.  

To show that the returned allocation is $\nicefrac{1}{2}$-EF1 \wrt the utilities, we note that the initial allocation (with empty bundles) is EF1. Further, during Phase 1, when we assign $A_i = T$, no agent envies any strict subset of $A_i$. Therefore, the allocation remains EF1 even after the swap and, hence, the EF1 property is maintained till the end of Phase 1. During Phase 2, each agent $i$ receives a subset of goods $B_i \subseteq C$ that is feasible ($|B_i \cap \itms_k|\leq 1$ for all $k$) and, hence, is not envied by any other agent $j$. This follows from the fact that there are no envied feasible subsets of the charity $C$ left after Phase 1. Finally, the subadditivity of the utilities ensures that the final allocation is $\nicefrac{1}{2}$-EF1. 
The theorem stands proved. 
\end{proof}
Next, we provide an exact version of Theorem \ref{theorem:mono-half-EF1} for the special case of two agents. The proof of this theorem closely follows from a result of \citet{KSV20}. Specifically, in Corollary 5.3, they show that for two agents with (arbitrary) monotone subjective utilities, there always exists a balanced EF1 allocation. In particular, their algorithm arbitrarily pairs up the goods and ensures that in the final EF1 allocation the two goods in each pair are given to different agents. We utilize this result to obtain the corollary below. Write $\pi_\mv$ to denote the ranking induced by the additive market valuation $\mv$ and pair the goods as $(\pi_v(1), \pi_v(2)), (\pi_v(3), \pi_v(4)), \ldots, (\pi_v(m-1), \pi_v(m))$.\footnote{If $m$ is odd, we can add a dummy good at the end, and let both agents have zero marginal subjective utility for it. Removing this good in the end preserves both our desired properties. This is how \citet{KSV20} handle an odd number of goods as well.} Note that if the two agents, respectively, receive exactly one good from every pair, then we get SD-EF1 \wrt market values. This reduction yields the result below. One can also verify that the algorithm of \citet{KSV20} can be simulated with polynomially many value queries to the subjective utilities (where, in a single query, one asks for the utility of an agent for a bundle of goods).

\begin{theorem}
\label{theorem:mono-two-agents}
    For the case of two agents, there always exists an allocation that is EF1 \wrt monotone subjective utilities and SD-EF1 \wrt additive market values, and it can be computed using polynomially many value queries. 
\end{theorem}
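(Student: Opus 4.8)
The plan is to derive the theorem from the two-agent balanced-EF1 algorithm of \citet{KSV20}, exactly along the reduction sketched just before the statement. First I would fix a strict ranking $\pi$ over $\itms$ that is consistent with the additive market valuation $\mv$, breaking ties arbitrarily, and pair up the goods as $(\pi(1),\pi(2)),(\pi(3),\pi(4)),\dots,(\pi(m-1),\pi(m))$; if $m$ is odd, I would first append a dummy good $g^*$, ranked last by $\mv$, with zero marginal subjective utility for both agents and zero market value, so that the number of goods is even and the pairs are well-defined (this is precisely how \citet{KSV20} treat an odd number of goods). Invoking \citet[Corollary 5.3]{KSV20} on \emph{this particular} pairing yields an allocation $\alc=(\alc_1,\alc_2)$ that is EF1 \wrt the monotone subjective utilities $\av_1,\av_2$ and in which the two goods of every pair are assigned to distinct agents.

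Next I would translate the pairing property into SD-EF1 \wrt $\mv$. When $n=2$, the cardinality classes appearing in \Cref{lem:sdef1-constraints} are $C_\ell=\set{a\in\itms : 2\ell-1\le \pi(a)\le 2\ell}$, which are exactly the pairs defined above. Since each agent receives precisely one good from each pair, we have $|\alc_i\cap C_\ell|\le 1$ for $i\in\set{1,2}$ and every $\ell\in[\ceil{m/2}]$, so \Cref{lem:sdef1-constraints} (applied to the identical profile $\av'_i=\mv$, which is consistent with $\pi$) gives that $\alc$ is SD-EF1 \wrt $\mv$. If a dummy good was added, removing it only deletes $g^*$ from one bundle and from the last class, which preserves both the EF1 guarantee for the subjective utilities (the marginal utility of $g^*$ was zero for each agent) and the per-class bound $|\alc_i\cap C_\ell|\le 1$, hence SD-EF1 for the market values on the original instance.

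Finally, for the query-complexity claim I would check that the \citet{KSV20} procedure touches $\av_1,\av_2$ only through polynomially many value queries: their algorithm maintains a two-bundle partition, repeatedly transferring a single good across the cut, and each decision compares an agent's value of one bundle against another; as there are $O(m)$ goods and only a polynomial number of such steps and comparisons, the total number of value queries is $\poly(m)$, and the reduction adds only $O(m\log m)$ overhead to compute $\pi$ and form the pairs. The main obstacle is the very first step: it relies on the fine structural feature of \citet{KSV20}'s algorithm, namely that the pairing may be prescribed in advance and that every pair ends up split between the two agents, so the argument must invoke Corollary 5.3 in this strong form rather than the mere existence of some balanced EF1 allocation; once that is in hand, the remaining steps are routine bookkeeping.
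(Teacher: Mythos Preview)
Your proposal is correct and follows essentially the same approach as the paper: both reduce to \citet[Corollary~5.3]{KSV20} by choosing the pairing $(\pi_v(1),\pi_v(2)),(\pi_v(3),\pi_v(4)),\dots$ induced by the market valuation (with a dummy good appended when $m$ is odd), and then observe that splitting every pair across the two agents is precisely the cardinality condition of \Cref{lem:sdef1-constraints} for $n=2$, yielding SD-EF1 \wrt $\mv$. Your write-up is in fact a bit more explicit than the paper's sketch---you spell out the invocation of \Cref{lem:sdef1-constraints}, the dummy-removal step, and the query-count argument---but the underlying idea is identical, and you correctly flag that the crux is the \emph{prescribed-pairing} feature of the KSV20 algorithm rather than mere balancedness.
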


\section{Cake Division}\label{sec:cake}
This section extends our fair division with market valuation setup to heterogeneous divisible goods, i.e., to the canonical setting of cake cutting. First, we briefly introduce the cake-cutting model. The cake is represented by the interval $[0, 1]$. A piece of the cake is a Borel subset of $[0,1]$. Valuations over the cake are defined via density functions. Specifically, under an integrable density function $f : [0,1] \to \R$, the value assigned to any piece $X$ of the cake is denoted (with a slight abuse of notation) as $f(X) := \int_X f(x) \ \dd x$; this induces a (countably additive) measure over the cake that is absolutely continuous with respect to the Lebesgue measure and, hence, non-atomic.\footnote{Integrating a density function is equivalent to assuming the measure to be absolutely continuous with respect to the Lebesgue measure, while non-atomicity of the measure is a weaker condition; see \cite{SS16,SS19}.} 

Let $\ags$ denote a set of $n$ agents participating in the cake division. An allocation $A = (A_1,\ldots,A_n)$ is a partition of the cake $[0,1]$ into $n$ pairwise-disjoint\footnote{The pieces are allowed to have an overlap of Lebesgue measure zero.} pieces. Each agent $i \in \ags$ has a subjective density function $u_i$; her utility in allocation $A=(A_1, \ldots, A_n)$ is $u_i(A_i)$. Also, there is a market density function $v$ that captures the market values of the pieces, i.e., $v(X)$ denotes the market value of piece $X$. 

An allocation $A=(A_1, \ldots, A_n)$ is envy-free (EF) \wrt subjective utilities if  $u_i(A_i) \geq u_i(A_j)$ for all agents $i,j \in \ags$, and envy-free with respect to the market valuation if $v(A_i) = v(A_j)$ for all $i,j \in \ags$. 

\subsection{Envy-Free Cake Division}\label{sec:cake-ef}

In this section, we study the existence and properties of cake allocations that are envy-free both \wrt subjective utilities and market valuation. The existence of such an allocation is rather easy to establish (as we note below), hence, we focus on two questions: (a) how many cuts are required in such an allocation, and (b) how many queries must be made in the Robertson-Webb model to find such an allocation.

\subsubsection{Number of Cuts}\label{sec:cake-cuts}
This subsection obtains upper and lower bounds on the number of cuts (i.e., the total number of intervals across all the pieces) required for the desired envy-free allocations. First, we provide a lower bound: There exist cake division instances wherein at least $2n-2$ cuts are required to achieve simultaneous envy-freeness.  

\begin{theorem}\label{thm:cake-lower}
    There exists cake division instances wherein any allocation that is envy-free---both \wrt the subjective utilities and the market valuation---has at least $2n-2$ cuts. This lower bound holds even when the agents' utilities are identical.
\end{theorem}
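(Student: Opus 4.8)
The plan is to construct an instance with identical subjective utilities in which envy-freeness on both sides forces every piece to be ``broken up'' into at least two intervals for all but (essentially) one agent. The natural construction: place $n-1$ thin, high-value ``spikes'' for the subjective utility $u$ (all agents share this $u$) spread out along the cake, and let the market density $v$ be the uniform density on $[0,1]$. Under the uniform $v$, envy-freeness \wrt market values forces $v(A_i) = 1/n$ for every agent, i.e., each bundle must have total length exactly $1/n$. Meanwhile, the subjective $u$ is arranged so that, to equalize $u$ across all agents (which EF \wrt identical utilities forces: $u(A_i)=u(A_j)$ for all $i,j$, hence all equal to $u([0,1])/n$), each agent must receive an equal fraction of the total spike mass; if the spikes are sufficiently thin and well-separated, no single agent can collect their $u$-share from a single contiguous interval while simultaneously having length exactly $1/n$, so each agent's piece must intersect at least two of the ``regions'' delimited by the spikes, contributing a cut.

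The key steps, in order: (1) fix the construction — I would use $n-1$ spikes of $u$-mass concentrated in tiny intervals, interleaved with long ``valley'' regions of negligible $u$-density, and $v \equiv 1$; tune the widths so that the $u$-share $u([0,1])/n$ of each agent exceeds the mass of any single spike but is less than the mass of two spikes, while the length constraint $1/n$ is much larger than any spike width. (2) Observe that EF \wrt $v$ (uniform) $\iff$ all $|A_i| = 1/n$, and EF \wrt identical $u$ $\iff$ all $u(A_i) = u([0,1])/n$ (envy-freeness among equal shares of identical valuations forces exact equality). (3) Argue a counting/pigeonhole bound on cuts: the $n-1$ spike intervals are ``split points'' in the sense that an agent whose piece lies strictly inside the region between two consecutive spikes cannot accumulate enough $u$-value; formalize that at least $n-1$ of the spikes must be ``straddled'' or have pieces on both sides in a way that forces $\geq 2n-2$ interval-endpoints beyond the two global endpoints $0$ and $1$. (4) Convert the straddling/splitting count into a cut count: each of the $n$ pieces is a union of intervals, the total number of intervals is (number of cuts) minus adjustments for the endpoints $0,1$; show the arrangement forces at least $2n-2$ cuts.

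I would present step (3) via the cleanest available combinatorial handle: consider the $n-1$ spikes as separators dividing $[0,1]$ into $n$ ``blocks''; if some agent received a piece contained in a single block, that piece has $u$-value at most the $u$-mass of that block, which I can make $< u([0,1])/n$ for all but at most one block — contradiction unless every agent's piece meets at least two blocks, i.e., straddles at least one spike, and then a careful accounting of how many distinct spikes get straddled (each straddled spike forces extra cuts) yields the bound. Alternatively, and perhaps more robustly, I would track the total number of maximal intervals directly: with all $|A_i|=1/n$ and the $u$-equalization constraint, show by an exchange/continuity argument that no allocation with fewer than $2n-2$ cuts can satisfy both.

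\textbf{Main obstacle.} The delicate part is the combinatorial lower-bound argument in step (3)–(4): it is easy to see \emph{some} pieces must be fragmented, but pinning down the exact count $2n-2$ — rather than a weaker $n-1$ or $n$ — requires carefully ruling out clever allocations where a few agents absorb many spikes in long non-contiguous pieces while others take clean intervals in the valleys. The tuning of spike widths must be tight enough that the $u$-share strictly exceeds one spike's mass (forcing each piece to straddle) yet the length budget $1/n$ is generous enough that straddling is geometrically possible; balancing these and then converting ``every piece straddles $\geq 1$ spike'' into ``$\geq 2n-2$ cuts'' via an endpoint-counting lemma is where the real work lies. I expect the construction will need the valleys to carry exactly zero $u$-density (or use a limiting argument) to make the contradiction clean.
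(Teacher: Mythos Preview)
Your construction is far more elaborate than needed, and the step you yourself flag as the main obstacle is a genuine gap, not just a technicality. The implication ``each piece straddles at least one spike'' $\Rightarrow$ ``$\geq 2n-2$ cuts'' does not go through: a single contiguous interval can straddle a spike, so straddling contributes nothing to the interval count. Worse, with $n-1$ equal spikes plus valley mass tuned as you describe, an agent can meet its $u$-share with one full spike plus an adjacent valley segment---still a single interval. So even the premise that every piece must be fragmented is not forced by your construction, and the endpoint-counting lemma you gesture at has no clear input to work with.

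The paper sidesteps all of this with a two-region instance rather than $n-1$ spikes. Take $u$ to have density $2$ on $[0,1/2]$ and $0$ on $[1/2,1]$, and let $v$ be uniform on $[0,1]$. EF \wrt the identical $u$ forces every $A_i$ to have Lebesgue measure exactly $1/(2n)$ inside $[0,1/2]$; EF \wrt $v$ forces each $A_i$ to have total length $1/n$, hence also measure exactly $1/(2n)$ inside $[1/2,1]$. Thus all $n$ agents receive positive measure in \emph{each} half, so each half must be partitioned into at least $n$ nonempty pieces---at least $n-1$ cuts in $(0,1/2)$ and at least $n-1$ cuts in $(1/2,1)$, disjoint sets, giving $2n-2$ cuts in total. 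No width tuning, no pigeonhole on fragmented pieces: just two disjoint sub-intervals that must each be split $n$ ways. Your intuition to use identical $u$ and uniform $v$ was right; the missed simplification is that one ``spike'' region (the left half) already does all the work.
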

\begin{proof}
    Consider an instance in which the agents' (identical)  density function $u$ is $2$ over $[0,1/2]$ and $0$ over $[1/2,1]$. Also, the market density function $v$ is $1$ over the entire cake $[0,1]$. 

    Consider any allocation $A = (A_1,\ldots,A_n)$ that is EF with respect to both $u$ and $v$. For $A$ to be EF with respect to $u$ (for the $n$ agents), it must be the case that $[0,1/2]$ is divided equally, by length, between the $n$ pieces $A_1, \ldots, A_n$. Such a division requires at least $n-1$ cuts in $[0,1/2]$. 
    
    Analogously, for EF under $v$, the entire cake $[0,1]$ must also be divided equally---by length---between the $n$ pieces. Now, given that $[0,1/2]$ is already divided equally, the same also holds for $[1/2,1]$. This requires at least $n-1$ cuts in $[1/2,1]$ as well. Hence, in total, at least $2n-2$ cuts are necessary to obtain EF allocation $A$. The theorem stands proved. 
\end{proof}

Complementing the lower bound in \Cref{thm:cake-lower}, we next show that envy-freeness for $n$ subjective utilities and a market valuation can be achieved simultaneously with $n(n-1)$ cuts. We obtain this upper bound by invoking the following (necklace splitting)  result of \citet{alon1987splitting} that guarantees existence of {\it perfect} cake divisions. 

\begin{theorem}[\cite{alon1987splitting}]\label{thm:necklace}
Given any $t$ density functions $\mu_1,\ldots,\mu_t$, there exists an $n$-division $A = (A_1,\ldots,A_n)$ of the cake with at most $t(n-1)$ cuts and the property that $\mu_i(A_k) = \nicefrac{1}{n}$,  for all $i \in [t]$ and each $k \in [n]$. That is, $A$ provides an equal division under all the $t$ densities. 
\end{theorem}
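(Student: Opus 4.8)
\emph{Plan.} I would prove this as the continuous (``consensus $\nicefrac{1}{n}$-division'') form of the necklace-splitting theorem, via equivariant topology. Normalize so that $\mu_i([0,1]) = 1$ for every $i$, and note that each $\mu_i$ is non-atomic, so degenerate intervals carry $\mu_i$-measure zero --- this is what makes the maps below continuous. I would first handle $n = 2$ with the classical Borsuk--Ulam theorem: parametrize a signed partition of $[0,1]$ into $t+1$ consecutive intervals by a point $x = (x_1, \ldots, x_{t+1}) \in S^{t} \subseteq \mathbb{R}^{t+1}$, where the $j$-th consecutive subinterval $I_j(x)$ has length $x_j^2$ and color $+$ if $x_j \ge 0$ and color $-$ if $x_j < 0$; set $P^+(x) = \bigcup_{x_j \ge 0} I_j(x)$ and $P^-(x) = \bigcup_{x_j < 0} I_j(x)$. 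The map $f : S^{t} \to \mathbb{R}^{t}$ with $f(x)_i = \mu_i(P^+(x)) - \mu_i(P^-(x))$ is well defined and continuous (a length-zero interval contributes nothing, so the convention at $x_j = 0$ is irrelevant) and odd, so Borsuk--Ulam yields $x^\star$ with $f(x^\star) = 0$; then $\mu_i(P^+) = \mu_i(P^-) = \nicefrac{1}{2}$ for all $i$, and merging adjacent equal-colored intervals and discarding degenerate ones leaves a $2$-partition with at most $t = t(2-1)$ cuts.

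Next I would run the same scheme for $n = p$ prime, replacing $\mathbb{Z}_2$ by $\mathbb{Z}_p$ --- this is the substantive step. Let $N := t(p-1)+1$ and take $X := (\mathbb{Z}_p)^{*N}$, the $N$-fold join, whose points $(\lambda_1\omega_1, \ldots, \lambda_N\omega_N)$ --- with $\lambda_j \ge 0$, $\sum_j \lambda_j = 1$, $\omega_j \in \mathbb{Z}_p$ --- encode the partition of $[0,1]$ into $N$ consecutive intervals of lengths $\lambda_1, \ldots, \lambda_N$ with the $j$-th interval colored $\omega_j$. Then $\mathbb{Z}_p$ acts freely on $X$ by cyclically permuting the colors, and $X$ is $(N-2)$-connected of dimension $N-1 = t(p-1)$. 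Define a $\mathbb{Z}_p$-equivariant test map $F : X \to (W_p)^{\oplus t}$, where $W_p \cong \mathbb{R}^{p-1}$ is the sum-zero subspace of $\mathbb{R}^{p}$ with $\mathbb{Z}_p$ cyclically permuting coordinates, whose $i$-th block is the vector $\left( \mu_i(C_c(x)) - \nicefrac{1}{p} \right)_{c \in \mathbb{Z}_p}$, with $C_c(x)$ the union of the intervals colored $c$ (this vector lies in $W_p$ since the $p$ masses sum to $1$); $F$ is continuous because degenerate intervals contribute nothing. If $F$ never vanished, we would obtain a $\mathbb{Z}_p$-map from $X$ to the sphere $S((W_p)^{\oplus t})$, a free $\mathbb{Z}_p$-complex of dimension $t(p-1)-1$, contradicting the $\mathbb{Z}_p$-Borsuk--Ulam theorem (Dold), since $X$ is free and $(t(p-1)-1)$-connected. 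Hence $F$ vanishes at some point: a $p$-coloring all of whose color classes have $\mu_i$-mass $\nicefrac{1}{p}$ for every $i$, using at most $N - 1 = t(p-1)$ cuts. Constructing exactly this free $\mathbb{Z}_p$-space and target representation --- and checking freeness and connectivity --- is the main obstacle; everything else is bookkeeping.

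Finally, for general $n$ I would factor $n = q_1 q_2 \cdots q_r$ into primes and prove by induction on $n$ the (slightly stronger) claim: for any interval $J$ and any $t$ measures on it, $J$ can be partitioned into $n$ parts, each a finite union of subintervals carrying exactly a $\nicefrac{1}{n}$-fraction of every measure, using at most $t(n-1)$ cuts. Writing $n = ab$ with $a = q_1$ prime, first apply the prime case to split $J$ into $a$ parts $P_1, \ldots, P_a$, each carrying a $\nicefrac{1}{a}$-fraction of every $\mu_i$, with at most $t(a-1)$ cuts; then, for each $P_j$, concatenate its finitely many subintervals into a single virtual interval, invoke the induction hypothesis for $b$ to split it into $b$ sub-parts each carrying a $\nicefrac{1}{b}$-fraction of $\mu_i|_{P_j}$ --- hence a $\nicefrac{1}{ab}$-fraction of $\mu_i$ --- with at most $t(b-1)$ new cuts, and pull everything back to the line (a new cut landing on an already-existing junction only helps the count). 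Summing over the $a$ parts, the total is at most $t(a-1) + a\cdot t(b-1) = t(ab - 1) = t(n-1)$ cuts, and the resulting $n$ parts each carry a $\nicefrac{1}{n}$-fraction of every $\mu_i$; the identity $(a-1) + a(b-1) = ab - 1$ --- iterated, $(q_1 - 1) + q_1(q_2 - 1) + \cdots + q_1 \cdots q_{r-1}(q_r - 1) = q_1 \cdots q_r - 1$ --- is precisely what makes the cut count come out to $t(n-1)$. Taking $J = [0,1]$ completes the proof. (Alternatively one could cite Alon's discrete necklace-splitting theorem and pass to the continuous limit by discretizing the densities into many tiny beads and using compactness of the space of cut configurations; the topological content is identical.)
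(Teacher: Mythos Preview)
Your proof sketch is correct and is essentially Alon's original argument (Borsuk--Ulam for $n=2$, the $\mathbb{Z}_p$-equivariant version via Dold's theorem on the join $(\mathbb{Z}_p)^{*N}$ for $n=p$ prime, then the telescoping induction $(a-1)+a(b-1)=ab-1$ for composite $n$). Note, however, that the paper does not prove this statement at all: it is quoted as a black box from \cite{alon1987splitting} and then invoked to derive \Cref{thm:cake-upper}. So there is no ``paper's own proof'' to compare against---you have supplied exactly the proof the citation points to.
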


Note that \Cref{thm:necklace} holds for any number of densities $t$. The theorem leads to the following fairness guarantee. 

\begin{theorem}
\label{thm:cake-upper}
In any cake division instance with $n$ agents, there always exists a cake division that has at most $n(n-1)$ cuts and is envy-free, under both the subjective utilities and the market valuation. Further, if the agents have identical utilities, then $2n-2$ cuts suffice for inducing simultaneous envy-freeness. 
\end{theorem}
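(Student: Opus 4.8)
The plan is to derive this from the perfect-division guarantee of \Cref{thm:necklace}. The naive route---feeding all $n+1$ densities $u_1,\dots,u_n,v$ into \Cref{thm:necklace}---produces a division in which every agent and the market value each of the $n$ pieces at exactly $\nicefrac{1}{n}$, so \emph{any} assignment of pieces to agents is simultaneously envy-free; but this costs $(n+1)(n-1) = n^2-1$ cuts, which is worse than the claimed $n(n-1)$. The key observation is that we do not need \emph{every} agent to see all pieces as identical: it suffices to make all pieces look identical to $n-1$ of the agents and to the market, and then let the last agent pick a favorite piece.

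Concretely, I would apply \Cref{thm:necklace} to the $t = n$ densities $u_1, \dots, u_{n-1}, v$. This yields an $n$-division $(A_1,\dots,A_n)$ with at most $n(n-1)$ cuts such that $u_i(A_k) = \nicefrac{1}{n}$ for every $i \in [n-1]$ and every $k \in [n]$, and $v(A_k) = \nicefrac{1}{n}$ for every $k \in [n]$. Now reassign the labels of the pieces: hand agent $n$ a piece maximizing $u_n$ among the $n$ pieces, and distribute the remaining $n-1$ pieces arbitrarily among agents $1,\dots,n-1$.

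Checking envy-freeness is then immediate. For each $i \le n-1$, agent $i$ values her own piece and every other piece at exactly $\nicefrac{1}{n}$, so she has no envy; agent $n$ holds, by construction, a piece of maximum $u_n$-value, so she does not envy either; and since $v(A_k) = \nicefrac{1}{n}$ for all $k$, the allocation is (perfectly) envy-free with respect to the market valuation as well. The identical-utilities case is even simpler: with $u_1 = \dots = u_n = u$, there are only $t=2$ distinct densities $u$ and $v$, so \Cref{thm:necklace} gives a division with at most $2(n-1) = 2n-2$ cuts that is perfect for both $u$ and $v$; any assignment of the pieces then makes every agent value her piece and every other piece at $\nicefrac{1}{n}$, yielding simultaneous envy-freeness and matching the lower bound of \Cref{thm:cake-lower}.

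The only genuine obstacle is spotting the $n+1 \to n$ saving, i.e., that one agent can be ``handled for free'' by the choice of which piece to give her, exploiting that all $n$ pieces are interchangeable in the eyes of the other $n-1$ agents and of the market. The remaining steps---invoking \Cref{thm:necklace} and verifying the inequalities---are routine.
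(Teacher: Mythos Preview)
Your proof is correct and matches the paper's argument essentially line for line: apply \Cref{thm:necklace} to the $n$ densities $u_1,\dots,u_{n-1},v$, let agent $n$ take her favorite piece, and assign the rest arbitrarily; the identical-utilities case uses the two densities $u,v$ for $2(n-1)$ cuts. The ``save one density by letting the last agent choose'' trick you identified is exactly the idea the paper uses.
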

\begin{proof}
For a cake division instance wherein the $n$ agents have subjective utilities $u_1, \ldots,u_n$, we invoke \Cref{thm:necklace} with $t=n$. In particular, we set the first $t-1$ ($=n-1$) densities as follows: $\mu_i = u_i$, for $1 \leq i \leq t-1$, and set the last density $\mu_t = v$. The resulting allocation $A$ will have $t(n-1) = n(n-1)$ cuts. Then, we let the $n$th agent select her most preferred piece in $A$, and assign the remaining pieces arbitrarily among the first $n-1$ agents. Index the pieces of $A$ such that $A_n$ denotes the piece selected by agent $n$ and $A_i$ is the piece assigned to agent $i \in [n-1]$.

The allocation is envy-free for agent $n$, since she selects her most preferred piece. Further, the equal-divisions property from \Cref{thm:necklace} ensures that, for each agent $i \in [n-1]$ and all $j \in [n]$, we have $u_i(A_i) = \mu_i(A_i) = \nicefrac{1}{n} = \mu_i(A_j) = u_i(A_j)$. That is, allocation $A$ is envy-free for the agents. The equal-divisions property for $v = \mu_n$ gives us envy-freeness under the market valuation. This establishes the first part of the theorem.  

For the second part, we apply \Cref{thm:necklace} with $t=2$ densities. In particular, we set $\mu_1 = u$, where $u$ is the identical utility across the agents, and $\mu_2 = v$. Here, the number of cuts is at most $t(n-1) = 2n-2$. Note that equal division under $\mu_1 = u$ ensures EF for the agents. Also, equal division under $\mu_2 = v$ ensures EF with respect to the market values. This completes the proof. 
\end{proof}

The gap between \Cref{thm:cake-lower} and \Cref{thm:cake-upper} leaves open the following tantalizing question.

\begin{open}
   Does there always exists an allocation of the cake with $O(n)$ cuts that is envy-free \wrt the subjective utilities and the market valuation?
\end{open}
We believe that the flexibility of only having to achieve EF with respect to the subjective utilities (rather than an equal division) should permit using $o(n^2)$ cuts, perhaps even just $\Theta(n)$ cuts. 

\subsubsection{Number of Queries}\label{sec:cake-queries}
Recall that, in the standard Robertson-Webb query model for cake division, the work of \citet{aziz2016discrete} provides a protocol that finds envy-free division---under subjective utilities---with a finite (albeit hyper-exponential) number of queries. We note that, by contrast, a protocol with finite query complexity does not exist when envy-freeness is additionally required \wrt market valuation. 

We obtain this impossibility result by invoking the lower bound of \citet{procaccia2017lower} which shows that an \emph{equitable} cake division, even among two agents, cannot be computed using finitely many queries. Specifically, a cake division $(X_1,\ldots, X_n)$ is said to be equitable \wrt densities $\mu_1, \ldots, \mu_n$ if $\mu_i(X_i) = \mu_j(X_j)$ for all $i$ and $j$.   

One can reduce equitable cake division---under two densities $\mu_1$ and $\mu_2$---to fair division with market valuation as follows. Consider a cake-division instance with two agents whose identical subjective utilities are $\mu_1$. Also, let the market valuation be $\mu_2$. Here, any division $A=(A_1, A_2)$ that is EF \wrt the subjective utilities $\mu_1$ and the market valuation $\mu_2$ is equitable, $\mu_1(A_1) = \mu_1(A_2) = 1/2$ and $\mu_2(A_2) = \mu_2(A_1) = 1/2$. Hence, using the lower bound from \cite{procaccia2017lower}, we obtain that, with finitely many queries, one cannot find a division that is EF simultaneously \wrt subjective utilities and market valuation.

\subsection{Pareto Optimality}\label{sec:cake-po}
This section addresses the inclusion of Pareto Efficiency as an additional desideratum in cake division. It is known that, under agents' subjective utilities, there always exists an cake allocation that is envy-free and Pareto optimal (PO)~\cite{Wel85}. In particular, an allocation that maximizes the Nash social welfare (i.e., the product of agents' utilities) is one such fair and efficient allocation~\cite{SS19}. Now, when considering the market valuation, note that any allocation will be trivially PO. 

Hence, an interesting question is whether there always exists a cake division that is EF and PO on the subjective side as well as EF on the market side. It is easy to observe that, as in the case of indivisible goods (\Cref{sec:indiv-PO}), such a guarantee cannot be achieved if the agents' densities are zero for parts of the cake. For example, consider an instance with two agents such that agent $1$ has positive density over $[0,2/3]$, but zero density over $[2/3,1]$. By contrast, agent $2$ has zero density in $[0,2/3]$, but positive value density over $[2/3,1]$. Note that the only allocation $A=(A_1, A_2)$ that is PO with respect to these agents' utilities entails $A_1 = [0,2/3]$  and $A_2 = [2/3,1]$. However, if the market density $v$ is uniform over the entire cake $[0,1]$, then EF for $v$ would demand that the pieces assigned to the two agents have equal length. Since this does not hold for the PO allocation $A$, we get that the instance at hand does not admit an allocation that is EF and PO on the subjective side, and EF on the market side.

In fact, even if we assume that the agents' subjective densities are strictly positive over the entire cake, the existence of the desired allocation is not guaranteed. To establish this negative result, we adapt a construction of \citet{BJK13}, who used it to show that an allocation of the cake that is EF, PO, and equitable (EQ) under agents' utilities does not always exist. Recall that an allocation $A = (A_1, \ldots, A_n)$ is said to be equitable if $u_i(A_i) = u_j(A_j)$ for all agents $i,j$. \citet{BJK13} proved the impossibility for $n \ge 3$ agents and we do the same. In particular, this negative negative result even rules out a balanced EF and PO cake allocation; balancedness, the requirement that the piece allocated to each agent has the same length, is equivalent to EF \wrt the uniform market valuation, and is therefore a weaker requirement than EF \wrt an arbitrary market valuation.

Notably, the requirements in \cite{BB14} (i.e., EF, PO, and EQ) can be achieved together for two agents. We leave it as an open question whether, in the current context, a positive result can be obtained specifically for two agents. 

\begin{open}
    Does there always exist an allocation of the cake between $n=2$ agents that is EF and PO \wrt the subjective values and EF \wrt the market value?
\end{open}

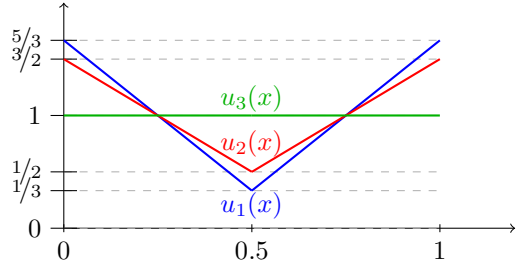
\begin{figure}[H]
    \centering
    \begin{tikzpicture}[xscale=5,yscale=1.5]
      \draw[->] (0,0) -- (1.2,0);
      \draw[->] (0,0) -- (0,2);
      \foreach \x/\xtext in {0/0, 0.5/0.5, 1/1}
        \draw (\x,1pt) -- (\x,-1pt) node[below] {$\xtext$};
      \foreach \y/\ytext in {0/0, {1/3}/\nicefrac{1}{3}, {1/2}/\nicefrac{1}{2}, 1/1, 1.5/\nicefrac{3}{2}, {5/3}/\nicefrac{5}{3}}{
        \draw[dashed, black!30!white] (0,\y) -- (1,\y);
        \draw (1pt,\y) -- (-1pt,\y) node[left] {$\ytext$};
      }
      \draw[blue, thick, domain=0:0.5] plot (\x, {5/3 - 8*\x/3});
      \draw[blue, thick, domain=0.5:1] plot (\x, {5/3 - 8*(1-\x)/3});
      \draw[red, thick, domain=0:0.5] plot (\x, {3/2 - 2*\x});
      \draw[red, thick, domain=0.5:1] plot (\x, {3/2 - 2*(1-\x)});
      \draw[green!70!black, thick, domain=0:0.5] plot (\x, {1});
      \draw[green!70!black, thick, domain=0.5:1] plot (\x, {1});
      \node[blue] at (0.5, 0.2) {$u_1(x)$};
      \node[red] at (0.5, 0.75) {$u_2(x)$};
      \node[green!70!black] at (0.5, 1.15) {$u_3(x)$};
    \end{tikzpicture}
    \caption{An instance in which no balanced EF+PO allocation exists.}
    \label{fig:balanced-ef-po}
\end{figure}

\begin{theorem}
\label{theorem:no-bal-ef-po}
    There exists a cake-division instance, with $3$ agents, wherein  no allocation is simultaneously balanced, EF, and PO.  Consequently, EF and PO \wrt the subjective utilities and EF \wrt the market valuation cannot be guaranteed in general.
\end{theorem}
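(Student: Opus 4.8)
The plan is to use the instance in \Cref{fig:balanced-ef-po}: three agents with symmetric ``V-shaped'' densities $u_1$ (equal to $5/3$ at the endpoints and $1/3$ at the midpoint), $u_2$ (equal to $3/2$ at the endpoints and $1/2$ at the midpoint), and the uniform density $u_3 \equiv 1$; each density integrates to $1$, and all three coincide at $x=1/4$ and $x=3/4$ with common value $1$. For the ``consequently'' part I take the market density $v$ to also be uniform, so that envy-freeness \wrt $v$ is precisely balancedness; it then suffices to show that no allocation of this instance is simultaneously balanced, EF, and PO \wrt $(u_1,u_2,u_3)$ --- which is exactly the construction adapted from \citet{BJK13}.

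Suppose for contradiction that $A = (A_1,A_2,A_3)$ is balanced, EF, and PO. I first apply the standard characterization of Pareto-optimal cake divisions: since all densities are strictly positive and, by balancedness, every bundle has positive length, there exist weights $\lambda_1,\lambda_2,\lambda_3 > 0$ such that, for almost every $x \in [0,1]$, the point $x$ is assigned to an agent $i$ maximizing $\lambda_i u_i(x)$. Because the weighted densities $\lambda_i u_i$ are affine on $[0,1/2]$ and on $[1/2,1]$ and pairwise distinct on each half, every pairwise-equality set is finite, hence null; so up to a null set $A_i = \{x : \lambda_i u_i(x) > \lambda_j u_j(x)\ \text{for all}\ j\neq i\}$.

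I then pin down the structure of $A$. Since $u_3\equiv 1$, we get $A_3 = \{x : u_1(x) < \lambda_3/\lambda_1\}\cap\{x : u_2(x) < \lambda_3/\lambda_2\}$ up to a null set, and as $u_1,u_2$ are V-shaped about $1/2$ each factor is an interval centered at $1/2$, so $A_3$ is an interval centered at $1/2$; balancedness forces $|A_3|=1/3$, hence $A_3 = (1/3,2/3)$. Thus agents $1$ and $2$ share $[0,1/3]\cup[2/3,1]$; by the symmetry of all three densities about $1/2$, each $A_i$ is symmetric about $1/2$ (up to a null set), so $A_1$ is determined by $A_1\cap[0,1/3]$, and since $\lambda_1 u_1 - \lambda_2 u_2$ is affine on $[0,1/3]$, the set $A_1\cap[0,1/3]$ is a subinterval --- of length $1/6$ by balancedness, hence $[0,1/6]$ or $[1/6,1/3]$. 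So, up to a null set, exactly one of agents $1,2$ receives the ``edge'' piece $E := [0,1/6]\cup[5/6,1]$ and the other receives the ``middle'' piece $F := [1/6,1/3]\cup[2/3,5/6]$ (with agent $3$ receiving $[1/3,2/3]$). A short integration then gives $u_1(E) = 13/27 > 1/3 = u_1(F)$ and $u_2(E) = 4/9 > 1/3 = u_2(F)$: both agents $1$ and $2$ strictly prefer $E$ to $F$. Hence whichever of them receives $F$ strictly envies the other, contradicting EF. This establishes the theorem, with the stated consequence following by taking $v$ uniform as above.

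The step I expect to require the most care is the structural argument: cleanly ruling out the degenerate configurations --- an empty or full sublevel set defining $A_3$, or one of agents $1,2$ receiving all or none of $[0,1/3]$ --- so that balancedness together with Pareto optimality genuinely leaves only the two candidate allocations. Invoking the Pareto characterization in the measure-theoretic setting and verifying the handful of short integrals are routine by comparison.
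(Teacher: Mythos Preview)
Your proposal is correct and follows essentially the same approach as the paper: the same instance, the same weighted-welfare characterization of PO (the paper cites \citet{barbanel1997two}), the same symmetry argument forcing $A_3=[1/3,2/3]$, and the same conclusion that whichever of agents $1,2$ receives $[1/6,1/3]\cup[2/3,5/6]$ envies the other. Your write-up is in fact somewhat more careful than the paper's in pinning down the structural step and in computing the explicit values $u_i(E),u_i(F)$.
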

\begin{proof}
    Consider an instance with three agents and subjective density functions, $u_1,u_2,u_3$, defined as follows (see also \Cref{fig:balanced-ef-po}): 
    \begin{align*}
    &u_1(x) = \begin{cases}
        \frac{5}{3}-\frac{8}{3} x & \text{ for } x \in [0,0.5],\\
        \frac{5}{3}-\frac{8}{3} (1-x) & \text{ for } x \in [0.5,1].
    \end{cases} \qquad \\
    & u_2(x) = \begin{cases}
        \frac{3}{2}-2x & \text{ for } x \in [0,0.5],\\
        \frac{3}{2}-2(1-x) & \text{ for } x \in [0.5,1].
    \end{cases}\\
    &u_3(x) = 1 \ \text{ for } x \in [0,1]   
    \end{align*}
    The market density function $v$ is uniform over the whole cake. Hence, EF \wrt the market values corresponds to balancedness (i.e., the pieces of the three agents must have equal Lebesgue measure). 

    First, we narrow down the space of candidate allocations using a result of \citet{barbanel1997two} that an allocation $A$ of the cake is PO iff it maximizes some weighted welfare, i.e., $A \in \argmax_B \sum_{i=1}^3 w_i \cdot u_i(B_i)$ for some $w_1,w_2,w_3 > 0$. Further, note that a weighted welfare is maximized iff  each portion of the cake is allocated to an agent with the highest weighted density for it. 
    
    This observation and the symmetry (around $1/2$) in the agents' utilities imply that, in any PO allocation, agent $3$ must be assigned the piece $[x^*,1-x^*]$ for some $x^* \in [0,1/2]$. Since the desired allocation has to be balanced, we obtain that agent $3$ must receive the piece $[1/3,2/3]$. Further, the weighted densities of agents $1$ and $2$ must intersect at $1/6$ and $5/6$, so that one of those two agents receives $[0,1/6] \cup [5/6,1]$ while the other receives $[1/6,1/3] \cup [2/3,5/6]$. Since both agents have higher utility for $[0,1/6] \cup [5/6,1]$, the agent receiving $[1/6,1/3] \cup [2/3,5/6]$ envies the other.\footnote{Further inspection tells us that the only balanced PO allocation is where $A_1 = [0,1/6] \cup [5/6,1]$, $A_2 = [1/6,1/3] \cup [2/3,5/6]$, and $A_3 = [1/3,2/3]$, and this maximizes weighted welfare where $w_1 : w_2 : w_3 = 126/110 : 6/5 : 1$.}
\end{proof}

If one wishes to forgo EF \wrt the subjective utilities, we show that there exists an allocation that is $\epsilon$-PO \wrt the subjective utilities and EF \wrt the market valuations, when the subjective density functions are strictly positive. This is an analogue of \Cref{thm:eq1po} in the cake-cutting model. In fact, we prove a more general result and allow \emph{heterogeneous} market valuations, whereby each agent $\ag$ has a (possibly different) market density function $\mv_\ag$.

\begin{theorem}
\label{thm:cake-eq-po-lipschitz}
For a cake-division instance with strictly positive subjective utility functions that are Lipschitz continuous and heterogeneous market values, there always exists an allocation of the cake that is $\epsilon$-PO \wrt the subjective utilities and EQ \wrt the market values. 
\end{theorem}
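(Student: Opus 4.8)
The plan is to discretize the cake, invoke the indivisible‑goods result \Cref{thm:eq1po}, and then pass to a limit. Fix $N \in \mathbb{N}$ and split $[0,1]$ into the $N$ equal subintervals $I_1^{(N)}, \dots, I_N^{(N)}$. I regard these as $N$ indivisible goods, where agent $i$'s (additive) subjective utility for $I_k^{(N)}$ is $u_i(I_k^{(N)})$ and its (additive, heterogeneous) market value for $I_k^{(N)}$ is $v_i(I_k^{(N)})$. Since each $u_i$ is a strictly positive density, $u_i(I_k^{(N)})>0$ for all $i,k$, so the discretized subjective utilities are nonzero and \Cref{thm:eq1po} yields an allocation $A^{(N)}=(A_1^{(N)},\dots,A_n^{(N)})$ of the $N$ goods that is fPO \wrt the discretized subjective utilities and EQ1 \wrt the discretized market values. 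Reading $A^{(N)}$ as a cake allocation (each $I_k^{(N)}$ handed in full to one agent) preserves all values exactly by additivity of the measures, so $u_i(A_i^{(N)})$ and $v_i(A_i^{(N)})$ are the same whether computed on the grid or on the cake.

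Next I would control the two approximation errors. On the market side, EQ1 \wrt the discretized market values gives, for every ordered pair $i,j$, that $v_j(A_j^{(N)})-v_i(A_i^{(N)}) \le \max_{k:\, I_k^{(N)}\subseteq A_j^{(N)}} v_j(I_k^{(N)}) \le \max_k v_j(I_k^{(N)})$ (and the inequality is trivial when $A_j^{(N)}=\emptyset$). Because each $v_j$ is induced by a density, the function $x\mapsto v_j([0,x])$ is uniformly continuous on $[0,1]$, so $\max_k v_j(I_k^{(N)})\to 0$ as $N\to\infty$; hence $\delta_N := \max_i v_i(A_i^{(N)})-\min_i v_i(A_i^{(N)}) \to 0$, i.e., $A^{(N)}$ is ``approximately EQ'' \wrt the market values. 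On the subjective side, given any cake allocation $B$, let $\tilde B$ be the fractional allocation of the grid that gives agent $i$ the fraction $|B_i\cap I_k^{(N)}|/|I_k^{(N)}|$ of $I_k^{(N)}$. Since $u_i$ is $L_i$-Lipschitz it oscillates by at most $L_i/N$ on each $I_k^{(N)}$, and a short estimate gives $u_i(\tilde B_i)\ge u_i(B_i)-L_i/N$ for every $i$. Consequently, if a cake allocation $B$ had $u_i(B_i)\ge u_i(A_i^{(N)})+\epsilon$ for all $i$ with $\epsilon>\epsilon_N:=\max_i L_i/N$, then $\tilde B$ would strictly Pareto‑dominate $A^{(N)}$ in the discretized instance, contradicting fPO; so $A^{(N)}$ is $\epsilon$-PO \wrt the subjective utilities for every $\epsilon>\epsilon_N$, and $\epsilon_N\to 0$.

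To turn ``approximately EQ'' into exact EQ, I would identify $A^{(N)}$ with the tuple of indicator functions $(\mathbf 1_{A_i^{(N)}})_i$ inside the weak-$*$-compact set of fractional partitions of $[0,1]$ and pass to a subsequence converging weak-$*$ to a fractional partition $(f_i^*)_i$; then $u_i(A_i^{(N)})\to\int f_i^* u_i$ and $v_i(A_i^{(N)})\to\int f_i^* v_i$ for each $i$. By the Dvoretzky--Wald--Wolfowitz (Lyapunov‑type) theorem, using that all the $u_i$ and $v_i$ are non‑atomic, the same utility‑and‑value vector $\big((\int f_i^* u_i)_i,\,(\int f_i^* v_i)_i\big)$ is realized by a genuine (integral) cake allocation $A^*$. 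Since $\delta_N\to 0$, the $v$‑values of $A^*$ are all equal, so $A^*$ is EQ \wrt the market values; since $\epsilon_N\to 0$ and the $u$‑values converge, a limiting argument shows $A^*$ is $\epsilon$-PO \wrt the subjective utilities for every $\epsilon>0$, which gives the statement.

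I expect the main obstacle to be this last passage. One must verify that both error terms vanish uniformly---$\delta_N$ on the market side, which uses only non‑atomicity of the market measures, and $\epsilon_N$ on the subjective side, for which the Lipschitz hypothesis is exactly what is needed to bound the rounding loss---and, crucially, one must argue that the limiting (a priori fractional) utility‑and‑value vector is attained by a deterministic allocation, which is precisely what the Lyapunov‑type Dvoretzky--Wald--Wolfowitz theorem supplies. (The Lipschitz assumption could be weakened to mere integrability by replacing the $L_i/N$ bound with an $L^1$ martingale‑convergence estimate for the conditional expectations $\mathbb E[u_i\mid I_1^{(N)},\dots,I_N^{(N)}]$, but the stated hypothesis makes the bound explicit.)
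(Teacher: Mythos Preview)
Your argument is correct, but it follows a genuinely different route from the paper. The paper approximates the \emph{densities}: it replaces each Lipschitz $u_i$ by a piecewise constant $u'_i$ within $\epsilon/2$ (\Cref{lem:apx-lipschitz}) and then invokes \Cref{thm:cake-PO-EQ-infinitesimal}, a continuous Fisher-market argument proved separately in the appendix, to obtain an allocation that is exactly PO \wrt $\avs'$ and exactly EQ \wrt $\mvs$; $\epsilon$-PO \wrt the original $\avs$ then follows immediately from the pointwise closeness of $\avs$ and $\avs'$. You instead discretize the \emph{cake}, feed the resulting indivisible instance into \Cref{thm:eq1po}, and recover exact EQ only in the limit via weak-$*$ compactness together with the Dvoretzky--Wald--Wolfowitz theorem. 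The trade-off is clear: the paper's route is shorter once \Cref{thm:cake-PO-EQ-infinitesimal} is available, and delivers exact EQ without any limiting step; your route sidesteps the need to redo the Fisher-market analysis in the continuous setting and reuses the indivisible-goods theorem already established, at the price of invoking heavier functional-analytic tools (Banach--Alaoglu plus metrizability of the $L^\infty$ ball via separability of $L^1$, and Lyapunov/DWW). A pleasant by-product of your construction is that the single limit allocation $A^*$ is $\epsilon$-PO for \emph{every} $\epsilon>0$ simultaneously (i.e., weakly Pareto optimal), which is nominally stronger than the statement as phrased. Your closing remark that Lipschitz can be relaxed is also well taken: only the rounding estimate $u_i(\tilde B_i)\ge u_i(B_i)-o_N(1)$ uses it, and any uniform $L^1$ approximation of $u_i$ by its conditional expectation on the grid would do.
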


An allocation $\alc$ is equitable (EQ) with respect to market valuations if $\mv_{\ag}(\alc_{\ag}) = \mv_{\ag'}(\alc_{\ag'})$ for all agents $\ag, \ag' \in \ags$. A function $f: [0, 1] \to \bbR$ is $K$-Lipschitz if $|f(x) - f(y)| \le K |x - y|$ for all $x, y \in [0, 1]$. Next, we define $\epsilon$-Pareto optimality.

\begin{definition}[$\epsilon$-Pareto optimality]
For a given $\epsilon > 0$, an allocation $\alc$ is $\epsilon$-Pareto optimal \wrt density functions $\{u_\ag\}_{\ag \in \ags}$  if there does not exist another allocation $\alc'$ for which $u_\ag(\alc'_\ag) \ge u_{\ag}(\alc_{\ag}) + \epsilon$ for all $\ag \in \ags$.
\end{definition}

We use the Lipschitz continuity assumption in \Cref{thm:cake-eq-po-lipschitz} to approximate the subjective utilities with piecewise constant density function as closely as needed, as is similarly done in \citet{CLPP11} and \citet{RP98}. A piecewise constant function $h$ can be expressed with a union of finitely many intervals such that $h$ is a constant function over each interval.

\begin{lemma}[\citet{CLPP11,RP98}]
\label{lem:apx-lipschitz}
Given a real number $\epsilon > 0$ and a $K$-Lipschitz density function $v: [0, 1] \to \bbR$, there exists a piecewise constant function $h$ such that $v(x) \le h(x) < v(x) + \epsilon / 2$ for all $x \in [0, 1]$.
\end{lemma}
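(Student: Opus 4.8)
The plan is a direct constructive argument: discretize the domain into short intervals and let $h$ take the maximum value of $v$ on each interval. Since $v$ is $K$-Lipschitz it is continuous, so it attains its maximum on every closed subinterval, and its oscillation across an interval of length $\delta$ is at most $K\delta$. Choosing $\delta$ small enough forces this oscillation strictly below $\epsilon/2$, which is exactly the slack the lemma permits. The value $h(x)$ is then always at least $v(x)$ (it is a maximum over a set containing $x$) and exceeds it by at most $K\delta$.

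Concretely, first I would fix an integer $N$ with $N > 2K/\epsilon$, so that $K/N < \epsilon/2$, and partition $[0,1]$ into the consecutive subintervals $I_k := [(k-1)/N,\, k/N)$ for $k \in [N-1]$ together with $I_N := [(N-1)/N,\, 1]$. This is a partition of $[0,1]$ into finitely many intervals, matching the definition of a piecewise constant function used in the paper. On each $I_k$ I would set $h$ equal to the constant $M_k := \max_{y \in \overline{I_k}} v(y)$, where $\overline{I_k}$ denotes the closure of $I_k$; this maximum exists because $v$ is continuous on the compact set $\overline{I_k}$. By construction $h$ is piecewise constant, and using half-open intervals makes $h$ unambiguously single-valued at the shared endpoints.

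It then remains to verify the two inequalities pointwise. Fix $x \in [0,1]$ and let $k$ be the unique index with $x \in I_k$. The lower bound is immediate: since $x \in \overline{I_k}$, we have $h(x) = M_k = \max_{y \in \overline{I_k}} v(y) \ge v(x)$. For the upper bound, write $M_k = v(y^\ast)$ for a maximizer $y^\ast \in \overline{I_k}$; the Lipschitz property gives $h(x) - v(x) = v(y^\ast) - v(x) \le K\,|y^\ast - x| \le K/N < \epsilon/2$, where the middle step uses that $x$ and $y^\ast$ both lie in an interval of length $1/N$. Hence $v(x) \le h(x) < v(x) + \epsilon/2$ for all $x$, as required. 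There is no genuine obstacle here; the only points needing (minimal) care are ensuring each $M_k$ is actually attained, which follows from continuity of a Lipschitz function on a compact interval, and the clean treatment of endpoints via the half-open $I_k$.
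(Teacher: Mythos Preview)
Your proof is correct and follows essentially the same approach as the paper: partition $[0,1]$ into sufficiently many short intervals and set $h$ to the supremum of $v$ on each. The paper uses $\lceil 4K/\epsilon\rceil$ intervals (yielding oscillation at most $\epsilon/4$) where you use any $N>2K/\epsilon$, and you are more careful about endpoint conventions and attainment of the maximum, but the argument is the same.
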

\begin{proof}
Partition the cake $[0, 1]$ into $\lceil 4K / \epsilon\rceil$ many intervals $\calI$ of length at most ${\epsilon}/{4K}$. For each interval $I \in \calI$, let $h(x) = \sup_{x \in [l, r]} v(x)$ for $x \in I$. This way, $f(x) \le h(x)$ and by the Lipschitz continuity and that the intervals are of length at most ${\epsilon}/{4K}$  we have $f(x) \le v(x) + \epsilon / 4 < v(x) + \epsilon / 2$.
\end{proof}

To prove \Cref{thm:cake-eq-po-lipschitz}, we round the subjective density functions as in \Cref{lem:apx-lipschitz}, and use the following result to achieve (exact) Pareto optimality \wrt the rounded subjective utilities and EQ \wrt market values. This is the crux of the whole result, and its detailed proof is presented in \Cref{app:cake-po}. 

\begin{theorem}\label{thm:cake-PO-EQ-infinitesimal}
For a cake division instance with strictly positive \emph{piecewise constant} subjective density functions $\{u_\ag\}_{\ag \in \ags}$ and heterogeneous market valuations  $\{v_i\}_{i \in \ags}$, there always exist an allocation of the cake that is PO \wrt the subjective utilities and EQ \wrt the market values.
\end{theorem}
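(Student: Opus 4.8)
The plan is to reduce this cake-division problem to a finite-dimensional allocation problem, exploiting the piecewise-constant structure. First I would take a common refinement of all the density functions: let $\mathcal{J} = \{J_1, \ldots, J_N\}$ be a partition of $[0,1]$ into finitely many intervals on each of which \emph{every} subjective density $u_i$ is constant (such a refinement exists since each $u_i$ is piecewise constant, and there are only $n$ of them). Over each interval $J_t$, the market densities $v_i$ need not be constant, but that will not matter: the only thing we need is that assigning a sub-piece of $J_t$ of length $\lambda$ to agent $i$ contributes a fixed value $u_i(J_t) \cdot \lambda / |J_t|$ to agent $i$'s subjective utility, linearly in $\lambda$. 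So the \emph{subjective} side is governed entirely by the fractional allocation matrix $(x_{it})$, where $x_{it} \in [0,1]$ is the fraction of $J_t$ given to agent $i$, with $\sum_i x_{it} = 1$ for each $t$.

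The key step is then an existence argument via a fixed-point / compactness approach in this finite-dimensional polytope. I would consider the set of all fractional allocations $(x_{it})$ together with the induced Pareto frontier with respect to the linearized subjective utilities; by Barbanel's characterization (as used in the proof of \Cref{theorem:no-bal-ef-po}), a cake allocation is PO \wrt the $u_i$ iff it maximizes some positive weighted sum $\sum_i w_i u_i(A_i)$, which — on the refined intervals — is a linear program in $(x_{it})$, so each whole interval $J_t$ goes entirely to an agent maximizing $w_i u_i(J_t)/|J_t|$, with ties broken by splitting. The strategy is to show that, as the weight vector $w$ ranges over the simplex, one can simultaneously tune the tie-breaking splits so as to equalize the market values $v_i(A_i)$ across agents. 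Concretely, I would fix a candidate ``equitable market value level'' and use an intermediate-value / KKM-type argument: for each weight vector $w$, the PO allocations form a face of the polytope, and on that face the achievable vectors of market values $(v_i(A_i))_i$ form a connected (indeed convex, after choosing which sub-pieces of a tied interval go where — here non-atomicity of $v_i$ lets us carve out any market value in a continuum) set; a topological fixed-point theorem (Brouwer, or a Borsuk--Ulam/KKM variant as in perfect-partition arguments) then yields a choice of $w$ and splits making all $v_i(A_i)$ equal.

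The main obstacle I anticipate is controlling the \emph{market} values on the tied intervals and ensuring the equalization is achievable simultaneously with PO — i.e.\ that the continuity/connectedness needed for the fixed-point step genuinely holds at the boundaries where the optimal-agent set for an interval changes. Two points need care: (i) when strictly positive densities guarantee that the LP optimum is attained and that ties occur only on lower-dimensional sets of weights, so that generic $w$ gives a unique PO allocation and the map $w \mapsto (v_i(A_i))_i$ is well-behaved; and (ii) at non-generic $w$, where an interval is tied between several agents, we get \emph{flexibility} (a whole simplex of market-value vectors via splitting, using non-atomicity of the $v_i$), and this flexibility is exactly what we need to hit the equitable point — so the argument should be set up as: start from the Nash-welfare-type PO allocation, then perturb weights and redistribute tied mass to drive the market values toward equality, invoking a fixed-point theorem to conclude existence. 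I would defer the full topological bookkeeping to the appendix (\Cref{app:cake-po}), as the statement already indicates.
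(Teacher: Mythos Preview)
Your reduction to finitely many intervals on which every $u_i$ is constant, and your appeal to the weighted-welfare characterization of PO, are both correct and match the paper's setup. From that point on, however, your route diverges from the paper's, and the divergence is exactly where your proposal has a genuine gap.

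The paper does \emph{not} run a topological fixed-point argument over the weight simplex. Instead it works in a Fisher-market framework (prices $p_{I_j}$, bang-per-buck, MBB graphs) and singles out one specific extremal allocation: among all allocations that are PO \wrt $\avs$, take one that (i) maximizes $\min_i v_i(A_i)$, then (ii) subject to (i), minimizes $\max_i v_i(A_i)$, then (iii) subject to (i)--(ii), minimizes the number of agents attaining that maximum. Existence of such an extremum is obtained by applying Dubins--Spanier compactness/convexity MBB-graph by MBB-graph (there are only finitely many such graphs). The EQ claim is then proved by contradiction via a direct transfer argument: if the allocation is not EQ, either (Case~1) some agent outside the max-value set has an MBB edge to an interval held by a max-value agent, and an infinitesimal transfer along that edge contradicts (ii) or (iii); or (Case~2) no such edge exists, in which case a uniform price raise creates one and reduces to Case~1. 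Strict positivity of the $u_i$ is used precisely to keep the price-raise factor finite.

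Your proposal, by contrast, leaves its central step unspecified: you assert that ``a topological fixed-point theorem (Brouwer, or a Borsuk--Ulam/KKM variant) then yields a choice of $w$ and splits making all $v_i(A_i)$ equal,'' but you do not say what map or correspondence the theorem is applied to, nor verify its hypotheses. That is not deferred bookkeeping; it is the entire argument. Two concrete obstacles you would have to overcome: (a) the correspondence from $w$ to the set of market-value vectors of $w$-optimal allocations is set-valued and only upper hemicontinuous, so Brouwer does not apply directly --- you would need Kakutani with a carefully designed convex-valued self-map of the simplex, and it is not obvious how to build one whose fixed points force equality of the $v_i(A_i)$; (b) any fixed point must lie in the \emph{interior} of the weight simplex (if $w_i=0$ then agent $i$ receives nothing under strictly positive densities, which is not PO), and ruling out boundary fixed points is an additional step you have not addressed. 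The paper's extremal-allocation-plus-transfer argument avoids all of this and is both more elementary and essentially constructive.
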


The proof of \Cref{thm:cake-eq-po-lipschitz} is simply a corollary of \Cref{thm:cake-PO-EQ-infinitesimal} and \Cref{lem:apx-lipschitz}.

\begin{proof}[Proof of \Cref{thm:cake-eq-po-lipschitz}]
We approximate the subjective utilities as in \Cref{lem:apx-lipschitz} and obtain $\avs' = \{\av'_\ag\}_{\ag \in \ags}$. By invoking \Cref{thm:cake-PO-EQ-infinitesimal} with $\avs \gets \avs'$ and $\mv_i \gets \mv$, we get an allocation $\alc$ which is PO \wrt $\avs'$ and EQ \wrt $\mv$. Since EQ and EF coincide when the market valuations are identical, $\alc$ is EF \wrt $\mv$. Further, from the approximation guarantee of \Cref{lem:apx-lipschitz} and that $\alc$ is PO \wrt $\avs'$, we have $\alc$ is $\epsilon$-PO \wrt $\avs$.
\end{proof}

Similar to the discussion of \citet{CLPP11}, our results can be easily extended to cases where the utility functions have finitely many discontinuity points and are Lipschitz continuous on each interval induced by the discontinuities.

\section{Conclusion and Future Work}\label{sec:discussion}
The current work identifies a conceptually important frontier of fair division that requires fairness under market values, along with the standard desiderata under subjective utilities. This framework raises many interesting questions and admits technically rich connections with settings, such as fair division with constraints. Several of the raised questions are, in fact, fundamental even in the standard fair division setting (i.e., without an explicit invocation of the market values), e.g., the open question regarding the existence of balanced EF1 allocations under monotone utilities.  

For multiple criteria---including SD-EF1, EF1, PO, MMS, and EFX---we obtain positive and negative results on simultaneous fairness under subjective utilities and the market valuation. With this groundwork, interesting directions for future work in the current framework include the allocation of chores, division of homogeneous goods, and price of fairness. The open questions explicitly identified throughout the paper are also relevant avenues for exploration.

\bibliographystyle{abbrvnat}
\bibliography{abb,ultimate,bib}

\cleardoublepage

\onecolumn
\appendix
\section*{\LARGE \centering Appendix}

\section{Fractional Pareto Optimality}\label{app:fpo}

A fractional allocation $\alc \in [0, 1]^{\ags \times \itms}$ is a complete division of goods such that $\sum_{\ag \in \ags} \alc_{\ag, \itm} = 1$ for each $\itm \in \itms$. Next, we define fractional Pareto optimality which implies \Cref{def:po}.

\begin{definition}[Fractional Pareto Optimality] An allocation $\alc$ is \emph{fractionally} Pareto optimal (fPO) if there does not exist a \emph{fractional} allocation $\alc' \in [0, 1]^{\ags \times \itms}$ such that $\av_\ag(\alc'_\ag) \ge \av_\ag(\alc_\ag)$ with one inequality being strict.
\end{definition}

Similar to the discussion in \Cref{sec:indiv-PO}, every allocation is PO \wrt the market values and the interesting case is to consider PO \wrt subjective utilities. The impossibility of \Cref{thm:po-sdef1-impossible} extends to the stricter requirement of fPO \wrt the subjective utilities. However, we answer  a stronger version of \Cref{open:ef1po-ef1} in negative.  

\begin{proposition}\label{prop:ef1-fpo-nonexample}
There exists an instance with 4 goods and 2 agents where no allocation is EF1 and fPO \wrt subjective utilities and EF1 \wrt market values.
\end{proposition}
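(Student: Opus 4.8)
The plan is to build a small explicit instance and derive a contradiction from two structural facts. First, if all market values are equal, then EF1 \wrt the market valuation degenerates to balancedness, so only $2$-$2$ splits need to be considered. Second, for two agents with additive utilities there is the classical ``no‑inversion'' characterization of fractional Pareto optimality (in the spirit of the weighted‑welfare characterization of PO used in \Cref{theorem:no-bal-ef-po}), which pins down the unique balanced fPO allocation, after which one checks directly that it fails EF1 \wrt the subjective utilities.

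Concretely, take four goods $g_1,g_2,g_3,g_4$ with subjective utilities $u_1(g_1)=u_1(g_2)=4$, $u_1(g_3)=u_1(g_4)=1$ and $u_2(g_1)=u_2(g_2)=3$, $u_2(g_3)=u_2(g_4)=1$, and market value $v(g)=1$ for every good. Step one: any allocation that is EF1 \wrt $v$ is balanced, i.e.\ each agent receives exactly two goods. Indeed, if one agent received at most one good and the other at least three, then for the uniform $v$ the ``poorer'' agent would envy the other even after the deletion of any single good (and if the poorer agent's bundle is empty, the inequality $0 \ge v(\text{three goods})$ still fails), contradicting EF1 \wrt $v$.

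Step two: among balanced allocations, the only one that is fPO \wrt $(u_1,u_2)$ is $A^\star=(\{g_1,g_2\},\{g_3,g_4\})$. Here I use that an allocation $(A_1,A_2)$ of two additive agents is fPO iff there is no strictly inverted pair $g\in A_1$, $g'\in A_2$ with $u_1(g)/u_2(g) < u_1(g')/u_2(g')$ (equivalently, $A$ maximizes $\lambda_1 u_1(A_1)+\lambda_2 u_2(A_2)$ over all fractional allocations for some $\lambda_1,\lambda_2>0$). The ratios are $u_1(g_1)/u_2(g_1)=u_1(g_2)/u_2(g_2)=\tfrac43$ and $u_1(g_3)/u_2(g_3)=u_1(g_4)/u_2(g_4)=1$, so an inversion occurs exactly when $A_1$ meets $\{g_3,g_4\}$ and $A_2$ meets $\{g_1,g_2\}$; hence fPO forces $\{g_1,g_2\}\subseteq A_1$ (equivalently $\{g_3,g_4\}\subseteq A_2$), and together with $|A_1|=|A_2|=2$ this yields $A=A^\star$. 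Step three: $A^\star$ is not EF1 \wrt $(u_1,u_2)$, since $u_2(A^\star_2)=2$ while $u_2(A^\star_1\setminus\{g\})=3$ for either $g\in\{g_1,g_2\}$, so agent $2$ envies agent $1$ after the removal of any single good. Combining the three steps, no allocation is simultaneously EF1 and fPO \wrt the subjective utilities and EF1 \wrt the market values.

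The part requiring the most care is calibrating the instance so that the numbers serve two purposes at once: the ratio gap between $\{g_1,g_2\}$ and $\{g_3,g_4\}$ must be strict so that balancedness together with fPO forces exactly $A^\star$, while simultaneously $u_2(g_1),u_2(g_2)$ must strictly exceed $u_2(g_3)+u_2(g_4)$ so that $A^\star$ genuinely violates EF1 for agent $2$; the choice above meets both. I would also flag (perhaps as a remark) that fPO cannot be relaxed to PO in the statement: for this very instance the allocation $(\{g_1,g_3\},\{g_2,g_4\})$ is balanced, PO, and EF1 \wrt $(u_1,u_2)$, so the impossibility genuinely exploits fractional Pareto optimality, consistent with \Cref{open:ef1po-ef1} remaining open.
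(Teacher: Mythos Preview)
Your proof is correct and uses exactly the same instance as the paper; the only difference is the order in which you intersect the constraints (you impose EF1 \wrt $v$ first to reduce to balanced allocations and then apply the fPO ratio test, whereas the paper first identifies the EF1+fPO allocations \wrt $(u_1,u_2)$ and then checks they are unbalanced). Your closing remark that $(\{g_1,g_3\},\{g_2,g_4\})$ is balanced, PO, and EF1---so the impossibility genuinely hinges on \emph{fractional} PO---is a nice addition that the paper does not make explicit.
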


\begin{proof}
Consider an instance with two agents, four goods, and the following valuations.
\begin{itemize}
    \item Market valuation: $\mv(\itm) = 1, \forall \itm \in \itms$. 
    \item The first agent: $u_1: \{\itm_1: 4, \itm_2: 4, \itm_3: 1, \itm_4: 1\}$.
    \item The second agent: $u_2: \{\itm_1: 3, \itm_2: 3, \itm_3: 1, \itm_4: 1\}$.
\end{itemize}
The unique EF1 and fPO allocation \wrt subjective utilities is $\alc_1 = \{1\}$ and $\alc_{2} = \{2, 3, 4\}$. However, this is not EF1 \wrt the market values as agent $1$ envies agent $2$.
\end{proof}

\section{PO \wrt Subjective Utilities and EF1 \wrt Market Values}

In this section, we prove \Cref{thm:eq1po}. Our algorithm and analysis builds on the work of \citet{freeman2019} and \citet{murhekar2021fair} who show EQ1 and fPO allocations exist when all agents have strictly positive utilities. They build on the work of \citet{BKV18} who designed a pseudo-polynomial time algorithm to find an allocation of goods that is EF1 and PO using Fisher markets. At a high-level, the algorithm performs a local search over Pareto optimal allocations that eventually reaches an EF1 (and PO) allocation. 

We first recall Fisher markets and make the necessary definitions to describe the algorithm, and then outline the algorithm and how the proof of \citet{murhekar2021fair} and \citet{freeman2019} can be adapted to prove our \Cref{thm:eq1po}. 

\subsection{Fisher Markets}
\label{sec:fisher-markets}
In a \emph{Fisher market}, in addition to the utilities, we have a price vector $\pp = (p_1, \ldots, p_m)$ where $p_\itm$ is the price of good $\itm$. The \emph{bang-per-buck} ratio of good $\itm$ for agent $i$ is defined as $\bb_\ag(\itm) = {u_\ag(\itm)}/{p_\itm}$. We refer to the maximum of such ratios as the \emph{maximum bang per buck} (MBB) ratio of agent $\ag$ denoted by $\mbb_\ag = \max_{\itm \in \itms} \bb_\ag(\itm)$. Next, we define the Fisher market equilibrium.

\begin{definition}[Fisher Market Equilibrium]
An allocation $\alc$ and price vector $\pp$ is a \emph{Fisher market} equilibrium if all goods are allocated and each agent receives goods that have the maximum bang-per-buck ratio, i.e., $\bb_\ag(\itm) = \mbb_i$ for all agents $\ag$ and item $\itm \in \alc_\ag$.
\end{definition}

Every Fisher market equilibrium is Pareto optimal due to the first welfare theorem.
\begin{proposition}
Every allocation of a Fisher market equilibrium is Pareto optimal.
\end{proposition}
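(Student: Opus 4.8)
The statement to prove is the First Welfare Theorem in this Fisher-market setting: every allocation $\alc$ arising from a Fisher market equilibrium $(\alc, \pp)$ is Pareto optimal with respect to the subjective utilities. The plan is a standard price-based contradiction argument. Suppose, for contradiction, that some allocation $\alc'$ Pareto dominates $\alc$, i.e., $\av_\ag(\alc'_\ag) \ge \av_\ag(\alc_\ag)$ for all $\ag \in \ags$, with at least one inequality strict. The key quantity is the \emph{price} of a bundle, $\pp(S) := \sum_{\itm \in S} p_\itm$, and the crucial inequality linking utility and price at equilibrium: since every agent only receives MBB goods, $\av_\ag(S) \le \mbb_\ag \cdot \pp(S)$ for \emph{every} bundle $S$, with equality when $S$ consists solely of MBB goods of agent $\ag$ (in particular $\av_\ag(\alc_\ag) = \mbb_\ag \cdot \pp(\alc_\ag)$).

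First I would establish, for each agent $\ag$, that $\pp(\alc'_\ag) \ge \pp(\alc_\ag)$: indeed $\mbb_\ag \cdot \pp(\alc'_\ag) \ge \av_\ag(\alc'_\ag) \ge \av_\ag(\alc_\ag) = \mbb_\ag \cdot \pp(\alc_\ag)$, and dividing by $\mbb_\ag > 0$ (which holds because utilities are nonzero, so prices and MBB ratios are strictly positive at equilibrium) gives the claim. Moreover, for the agent $\ag^\star$ whose inequality is strict, the first inequality in this chain is strict, so $\pp(\alc'_{\ag^\star}) > \pp(\alc_{\ag^\star})$. Summing over all agents, $\sum_{\ag} \pp(\alc'_\ag) > \sum_{\ag} \pp(\alc_\ag)$. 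But both $\alc$ and $\alc'$ are complete allocations of the same set $\itms$ of goods (the equilibrium condition requires all goods allocated), so each side equals $\pp(\itms) = \sum_{\itm \in \itms} p_\itm$ — a contradiction.

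The only subtlety, and the step I would be most careful about, is justifying that $\mbb_\ag > 0$ for every agent, since the division by $\mbb_\ag$ is essential. This follows because the paper's Fisher-market setup is applied (in \Cref{thm:eq1po}) under the hypothesis of \emph{nonzero} (strictly positive) subjective utilities; with strictly positive $\av_\ag$, any equilibrium has strictly positive prices on all goods (a good with price zero would give infinite bang-per-buck and could not be in equilibrium), hence $\mbb_\ag = \max_\itm \av_\ag(\itm)/p_\itm$ is finite and strictly positive. I would state this positivity observation explicitly before the main argument. Everything else is routine bookkeeping with the inequality $\av_\ag(S) \le \mbb_\ag \cdot \pp(S)$ and the fact that summing bundle prices over a complete allocation always yields $\pp(\itms)$.
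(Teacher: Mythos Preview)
Your argument is correct: this is the standard proof of the First Welfare Theorem specialized to the Fisher-market setting, and every step goes through. The paper itself does not give a proof at all --- it simply states the proposition and attributes it to ``the first welfare theorem,'' treating it as a known fact. So there is nothing to compare at the level of technique; you have supplied the canonical proof of the result the paper merely cites.

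One small wording slip: in your chain $\mbb_\ag \cdot \pp(\alc'_\ag) \ge \av_\ag(\alc'_\ag) \ge \av_\ag(\alc_\ag) = \mbb_\ag \cdot \pp(\alc_\ag)$, it is the \emph{second} inequality (the Pareto-domination one) that is strict for $\ag^\star$, not the first. The conclusion $\pp(\alc'_{\ag^\star}) > \pp(\alc_{\ag^\star})$ is of course unaffected. You might also remark that the same argument, verbatim, rules out domination by a \emph{fractional} allocation (since $\av_\ag(S) \le \mbb_\ag \cdot \pp(S)$ and $\sum_\ag \pp(\alc'_\ag) = \pp(\itms)$ hold for fractional $\alc'$ as well), which is what the paper actually uses downstream to conclude fPO in \Cref{thm:eq1po}.
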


The following structures are useful for describing the algorithm. 

\begin{definition}[MBB Alternating Path]
For a Fisher market equilibrium $(\alc, \pp)$, we write $\ag_1 \stackrel{\itm}{\gets} i_2$ if agent $i_2$ is allocated good $\itm$, i.e., $\itm \in \alc_{i_2}$, and  good $\itm$ is an MBB good for agent $i_2$, i.e., $\bb_{i_1}(\itm) = \mbb_{i_1}$. 
An \emph{MBB alternating path} of length $\ell$ from $i_\ell$ to $i_0$ is a sequence $i_0 \stackrel{\itm_1}{\gets} i_1 \stackrel{\itm_2}{\gets} i_2 \stackrel{\itm_3}{\gets} \cdots \stackrel{\itm_\ell}{\gets} i_\ell$.
\end{definition}

\begin{definition}[MBB Hierarchy]
For a set of agents $\ags' \subseteq \ags$, the MBB hierarchy of $\ags'$ is the set of agents $\ag \in $ reachable from $\ags'$ through an MBB alternating path.
\end{definition}

\begin{algorithm}[t]
\DontPrintSemicolon
\SetAlgoNoEnd
\SetKwProg{PhaseI}{Phase 1}{}{}
\SetKwProg{PhaseII}{Phase 2}{}{}
\SetKwProg{PhaseIII}{Phase 3}{}{}
\caption{EQ1 \wrt Heterogeneous Market Values and PO \wrt Subjective Utilities}
\label{alg:eq1-po-two-valuations}
\KwIn{A fair division instance $\inst = (\ags, \itms, \mvs, \avs)$ with strictly positive agent utilities and non-negative heterogeneous market values}
\KwOut{An allocation $\alc$ that is EQ1 \wrt Heterogeneous Market Values and PO \wrt Subjective Utilities}
\PhaseI{Initialize by an equilibrium allocation \wrt subjective utilities, e.g., a social welfare maximizing allocation}{
$(\alc, \pp) \gets$ initial welfare maximizing allocation (\wrt subjective utilities) where $p_\itm = u_{i, \itm}$ for $\itm \in \alc_i$\;
}
\PhaseII{Reallocate goods towards a more equitable allocation \wrt market values}{
$L \gets \{i \in \ags \mid i \in \argmin_{h \in \ags} v_h(\alc_h)\}$\;
\If{$\exists i \in L, \exists$ MBB alternating path $i_0 \stackrel{\itm_1}\gets i_1 \stackrel{\itm_2}{\gets} \cdots \stackrel{\itm_\ell}{\gets} i_{\ell}$ with $v_{i_\ell}(\alc_{i_\ell} \setminus \{\itm_{\ell}\}) > v_i(\alc_i)$}{
    Choose one such path with minimum $\ell$\;
    Transfer $\itm_{\ell}$ from $i_{\ell}$ to $i_{\ell - 1}$\;
    Repeat from Phase 2\;
}
\lIf{$\alc$ is EQ1 \wrt market valuations}{\Return $\alc$}
}
\PhaseIII{Increase prices}{
$\beta \gets \min\{\mbb_i / \bb_i(\itm) \mid i \in \hierarchy_L, \itm \in \bigcup_{i \in \ags \setminus \hierarchy_L} \alc_i\}$\tcp*{\footnotesize{Note that $\mbb$ and $\bb$'s are based on subjective utilities $\avs$}}
\For{$\itm \in \bigcup_{i \in \hierarchy_L} \alc_i$}{
$p_\itm \gets \beta \cdot p_\itm$\;
}
Repeat from Phase 2 (i.e., go to line 4)\;
}
\end{algorithm}

\subsection{The Algorithm and Analysis}

\paragraph{The Algorithm.} We provide a constructive proof for \Cref{thm:eq1po} using \Cref{alg:eq1-po-two-valuations}. Similar to prior Fisher market based algorithms \cite{BKV18,freeman2019,murhekar2021fair}, the algorithm consists of three phases. The algorithm begins with an equilibrium allocation, which is Pareto optimal, and performs a local search by transferring goods locally (Phase 2) and changing the prices of the goods (Phase 3) in a way that the intermediate allocation $\alc$ and prices $\pp$ are always a Fisher market equilibrium. Eventually, the algorithm converges to an allocation that is EQ1 and fPO (as $\alc$ is an equilibrium allocation). 

In phase 1,  the algorithm begins with a utilitarian welfare maximizing allocation $\alc$ and sets the prices of goods equal to the utilities for which the allocated agents have. 

In phase 2, the algorithm takes the set of agents who have the lowest utility level $L = \{\ag \mid v_\ag(\alc_\ag) = \min_{h \in \ags} v_h(\alc_h)\}$, and searches through MBB paths for agents who ``violate'' the EQ1 condition. More formally, the algorithm searches for an agent $i_\ell$ via an MBB alternating path $i_0 \stackrel{\itm_1}\gets i_1 \stackrel{\itm_2}{\gets} \cdots \stackrel{\itm_\ell}{\gets} i_{\ell}$ such that $v_{i_\ell}(\alc_{i_\ell} \setminus \{\itm_{\ell}\}) > v_i(\alc_i)$. If such a path exists, the algorithm takes one of the shortest ones and transfers the good at the end of the path. Since the receiver of the good $\itm_\ell$ considers it MBB, the new allocation is still an equilibrium. 

If no such EQ1-violating MBB paths exist, in Phase 3, the algorithm finds the hierarchy $\hierarchy_L$ of the agents with the lowest level of utility $L$, who are reachable from $L$ via MBB alternating paths; and, it increases the prices of all goods allocated to $\hierarchy_L$ minimally such that a new MBB edge appears between $\hierarchy_L$ and $\ags \setminus \hierarchy_L$ the agents outside the hierarchy. Then, the algorithm continues with Phase 2 by searching for EQ1 violating MBB paths. This process will terminate in an allocation that is EQ1 \wrt market values and fPO \wrt subjective utilities allocation.

\paragraph{Sketch of the Analysis.} The analysis of the algorithm closely follows  the analysis of \citet{murhekar2021fair}, who show the existence of EQ1 and fPO allocations in the standard setting with heterogeneous subjective utilities. By invoking \Cref{alg:eq1-po-two-valuations} with $\mv \gets \avs$, we get the algorithm of \citet{murhekar2021fair} for the standard setting. 

Our key observation is that in \Cref{alg:eq1-po-two-valuations}, one can set the prices and define MBB goods for agents according to one valuation and ensure Pareto optimaly \wrt to this valuation while achieving EQ1 \wrt another valuation. We remark that the subjective utilities must be strictly positive, while the market valuations may have zero or positive values.  The requirement of nonzero subjective utilities ensures that in Phase 3, the price increase factor $\beta$ is bounded and no good has an unbounded bang-per-buck ratio. Recall that in the standard setting, which is a special case of our algorithm where $\avs = \mvs$, EQ1+PO is infeasible if there are zero utilities \cite{freeman2019}. 

We now provide a sketch of the proof of \Cref{thm:eq1po} by giving pointers to the analysis of \citet{freeman2019,murhekar2021fair}.

\eqonepo*{}

\begin{proof}
The allocation $\alc$ and price vector $\pp$ are always in an equilibrium, as shown in Lemma 2 of \citet{murhekar2021fair}. 

The only way \Cref{alg:eq1-po-two-valuations} terminates is by finding an EQ1 allocation. Thus, proving termination is the main challenge. A helpful observation is that the minimum market value never decreases, i.e., $\min_{\ag} v_\ag(\alc)$ is always non-decreasing, the proof of which follows similar to Lemma 6 of \citet{freeman2019}. This observation is used in deriving the following key arguments for the termination of the algorithm.
\begin{itemize}
    \item After $\poly(n, m)$ steps of Phase 2 and Phase 3, the set $L$ of agents with the lowest market value must change. This follows from arguments similar to Lemma 4 of \citet{murhekar2021fair} (for a more detailed proof see Lemma 13 of \citet{BKV18}).
    \item Further, if an agent $\ag$ ceases to be in the set $L$ at time-step $t_0$ with allocation $\alc^{t_0}$ and again becomes a part of $L$ at a later time-step $t_1$ with allocation $\alc^{t_1}$, it must hold that $v_\ag(\alc^{t_1}_\ag) > v_\ag(A^{t_0}_\ag)$. This closely follows Lemma 8 of \citet{murhekar2021fair}.
\end{itemize}

From the two arguments above combined, we have that the set $L$ changes after at most every $\poly(n, m)$ time-steps and it changes for at most $\poly(n, V)$ times. Phase 1 can simply be computed in $\poly(n, m)$ time. Therefore, the algorithm terminates in $O(\poly(n, m, V))$. 
\end{proof}

\section{Proof of \texorpdfstring{\Cref{thm:cake-PO-EQ-infinitesimal}}{Theorem~\ref{thm:cake-PO-EQ-infinitesimal}}}\label{app:cake-po}
\begin{proof}[Proof of \Cref{thm:cake-PO-EQ-infinitesimal}]
When the subjective density functions are piecewise constant, we can partition the cake into a set of $m$ intervals $\calI = \{I_1, \ldots, I_{m}\}$ such that for all agents $\ag \in \ags$, $u_\ag$ is a constant function over $I_j$ for all $\ag \in \ags$ and $j \in [m]$. In other words, we treat each such interval as a ``divisible good''.  We refer to the measure (length) of a piece of the cake $z$ by $\mu(z)$. 

\paragraph{Fisher Market Terminology.} We follow the Fisher market terminology defined in  \Cref{sec:fisher-markets}. For a fixed price vector $\pp$ over the intervals, the bang-per-buck ratio of agent $i$ for interval $I_j$ is defined as $\bb_i(I_j) = \frac{u_i(I_j) / \mu(I_j)}{p_{I_j}}$, where the numerator is the $i$'s density (total value for $I_j$ normalized by its length). Further, we call an interval $I_j$ a \emph{maximum bang-per-buck (MBB)} interval for agent $i$ if $\bb_i(I_j) = \max_{j \in [m]} \bb_i(I_j)$. For a price vector $\pp$, the MBB graph of $\pp$ is the bipartite graph with agents on one side and intervals on the other side with an edge $(i, I_j)$ existing iff $I_j$ is an MBB interval for $i$.

\paragraph{Attaining max egalitarian welfare \wrt market values subject to PO \wrt subjective utilities.} 
There are at most $2^{nm}$ MBB graphs (counting by fixing the subset of existing edges). Fix an MBB graph $G$ induced by a price vector $\pp$, where all intervals have at least one MBB edge (i.e., all intervals can be allocated). For each MBB graph $G$, we argue that we can find an allocation that maximizes the minimum value with respect to the heterogeneous market values among all allocations that respect the MBB graph. Specifically, this allocation must assign each interval $I_j$ only to agents for whom $I_j$ is MBB.

This allocation can be achieved by a straightforward application of the Dubins–Spanier theorem \cite{dubins1961cut}. According to this theorem, for each interval $I_j$, the space of valuation matrices (with respect to market values) restricted to the agents for whom $I_j$ is MBB is compact and convex. Summing the utility spaces over all intervals (padding the matrices with zeros for agents where the interval is not MBB) preserves compactness and convexity. Consequently, the space of valuation matrices over allocations that respect the MBB edges of $G$ remains compact and convex. Therefore, an allocation that maximizes the minimum utility for all agents is attainable.

Finally, by taking the maximum (egalitarian welfare) over all finitely many MBB graphs, we can achieve an allocation that maximizes the egalitarian welfare with respect to market values among all PO allocations \wrt subjective utilities.

\newcommand{\yy}{\ensuremath{\mathbf{y}}}
\paragraph{Existence of EQ \wrt market values and PO \wrt subjective utilities.} Similar to arguments for the max egalitarian welfare allocation, we can find an allocation $\xx$ satisfying the following.
\begin{enumerate}
    \item $\xx$ maximizes the egalitarian welfare among all PO allocations \wrt subjective utilities, i.e., 
    \[
    \xx \in \arg\max \{\min_{\ag \in \ags} \mv_\ag(\yy) \mid \yy ~\text{is PO \wrt}~ \avs\}.
    \]
    \item Subject to 1, $\xx$ minimizes the maximum utility. That is 
    \[
    \xx \in \arg\min \{\max_{\ag \in \ags} \mv_\ag(\yy) \mid \yy~\text{satisfies condition 1}\}.
    \]
    Let $v_{\min} = \min_{i} \mv_i(\xx)$ and $v_{\max} = \max_{i} \mv_i(\xx)$. 
    \item Subject to 1 and 2, $\xx$ minimizes the number of agents with the specified maximum utility of $v_{\max}$.
\end{enumerate}
We show that $\xx$ is EQ \wrt $\mvs$. Suppose,  by contradiction, that $\xx$ is not EQ. Let $\ags_{\max}$ be the set of agents with the maximum market value. Let $\calI_{\max} = \{I_j \mid \exists \ag \in \ags_{\max}, \mu(\xx_\ag \cap I_j) > 0\}$ be the set of intervals that a nonzero measure of which is allocated to $\ags_{\max}$.  We come to a contradiction by a case analysis.
\begin{itemize}
    \item \emph{Case 1.} Suppose there exists an agent $\ag' \in \ags \setminus \ags_{\max}$ that has an MBB edge to one of the intervals $I_j$ in $\calI_{\max}$ that is partially allocated to $\ag \in \ags_{\max}$. Then, we can transfer an infinitesimal fraction $\delta \in (0, \mu(\xx_{\ag'} \cap I_j))$ of $I_j$ from $\ag$ to $\ag'$ such that $\mv_{\ag'}(\xx_{\ag'} \cup \{\delta \cdot I_j\}) < v_{\max}$ and $\mv_{\ag}(\xx_{\ag} \setminus \{\delta \cdot I_j\}) > v_{\min}$ --- which is possible by the boundedness of the density function. This contradicts the condition 3 or 2 above.
    \item \emph{Case 2.} Otherwise, we have that all of $\calI_{\max}$ are only MBB to agents in $\ags_{\max}$. Since there are no MBB edges from $\ags \setminus \ags_{\max}$, we can increase the prices of all other intervals $\calI \setminus \calI_{\max}$, which are all allocated to agents in $\ags \setminus \ags_{\max}$ until an MBB edge from some agent in $\ags \setminus \ags_{\max}$ appears to an interval in $\calI_{\max}$. More specifically, we increase the prices by a factor of $\min\{\mbb_i / \bb_i(I_j) \mid \ag \in \ags \setminus \ags_{\max}, I_j \in \calI_{\max}\}$. After such a price increase, the allocation is still an equilibrium since (i) the bang-per-buck ratio for other intervals only decreases for agents in $\ags_{\max}$, and (ii) increasing the prices of all intervals in $\calI \setminus \calI_{\max}$ keeps the MBB edges from $\ags \setminus \ags_{\max}$ towards $\calI \setminus \calI_{\max}$ and the minimal increase in the prices ensures that the prior MBB edges remain MBB. Now, the condition of Case 1 is met, and we again come to a contradiction.
\end{itemize}

The requirement that subjective utilities be  strictly positive is necessary for the price rise argument in Case 2.

The proof can be slightly adjusted to work with market valuations that have zero densities. In Case 1, if the owner agent $\ag$ has a positive density over interval $I_j$, regardless of $\ag'$, the transfer is still valid and the fact that the market value of $\ag$ decreases is enough to get a contradiction. However, if $\mv_{\ag}$ has density $0$ over $I_j$, $\mv_{\ag}$ might remain the same. To address this, we add one more tie-breaking condition to the three conditions of $\xx$. The fourth condition is that $\xx$, subject to 1-3, minimizes the total measure of the cake allocated to $\ags_{\max}$ for which its owner  has zero market value. That is
\[
\xx \in \arg\min\left\{
\sum_{\ag\in\ags} \int_{x \in \yy_\ag} \mathbb{I}[\mv_{\ag}(x) = 0] \, d_x \; \mid \;  \yy~\text{satisfies conditions 1-3}\right
\}.
\]
This way, in the transfer of Case 1, if the market value of $\ag'$ is $0$, then the nonzero $\delta$ transfer that keeps $\mv_{\ag'}(\xx_{\ag'} \cup \{\delta \cdot I_j\}) < v_{\max}$ will strictly reduce the measure of the cake allocated to $\ags_{\max}$ with zero market value. This contradicts the fourth condition. Hence, $\xx$ is EQ \wrt heterogeneous market values and PO \wrt subjective density functions.
\end{proof}
\end{document}